\newcommand{\shortversion}[1]{}
\newcommand{\longversion}[1]{#1}
\newcommand{\email}[1]{\texttt{#1}}
\author{Johannes K. Fichte$^\dagger$ \and Arne Meier$^\ddagger$ \and Irina Schindler$^\ddagger$ \and\\
$^\dagger$ Technische Universität Wien, \email{jfichte@dbai.tuwien.ac.at}\\
$^\ddagger$ Leibniz Universität Hannover, \email{$\{$meier.schindler$\}$@thi.uni-hannover.de}
}
\author{Johannes K. Fichte\inst{1} \and Arne Meier\inst{2}\and Irina
  Schindler\inst{2}
}
\institute{Technische Universität Wien, \email{jfichte@dbai.tuwien.ac.at} \and Leibniz Universität Hannover, \email{$\{$meier.schindler$\}$@thi.uni-hannover.de}} 
\newtheorem{lemma}{Lemma} 
\newtheorem{theorem}{Theorem} 
\newtheorem{corollary}{Corollary} 
\newtheorem{proposition}{Proposition} 
\newtheorem{definition}{Definition} 
\newtheorem{example}{Example} 
\title{Strong Backdoors for Default Logic}
\begin{document}
\maketitle

\begin{abstract}
 In this paper, we introduce a notion of backdoors to Reiter's propositional default logic and study structural properties of it. Also we consider the problems of backdoor detection (parameterised by the solution size) as well as backdoor evaluation (parameterised by the size of the given backdoor), for various kinds of target classes (\CNF, \horn, \krom, \monotone, \id). We show that backdoor detection is fixed-parameter tractable for the considered target classes, and backdoor evaluation is either fixed-parameter tractable, in $\para\Delta^P_2$, or in $\para\NP$, depending on the target class.
\end{abstract}

\section{Introduction}
In the area of non-monotonic logic one aims to find formalisms that
model human-sense reasoning. It turned out that this kind of reasoning is quite different
from classical deductive reasoning as in the classical approach the
addition of information always leads to an increase of derivable
knowledge. Yet, intuitively, human-sense reasoning does not work in
that way: the addition of further facts might violate previous
assumptions and can therefore significantly decrease the amount of
derivable conclusions. Hence, in contrast to the classical process the behaviour of human-sense reasoning 
is non-monotonic. In the 1980s, several kinds of formalisms have been
introduced, most notably, circumscription~\cite{mc80}, default logic~\cite{reiter:80}, autoepistemic logic~\cite{m85}, and non-monotonic logic \cite{md80}. A good introduction into this field is given by Marek and Truszczynśki~\cite{matr93}.

In this paper, we focus on Reiter's \emph{Default Logic (DL)}, which
has been introduced in 1980~\cite{reiter:80} and is one of the most
fundamental formalism for modelling human-sense reasoning. DL extends
the usual logical derivations by rules of default assumptions (\emph{default rules}). Informally, default rules follow the format ``in the absence of contrary information, assume~$\dots$''. Technically, these patterns
are taken up in triples of formulas $\frac{\alpha:\beta}{\gamma}$, which express ``if prerequisite $\alpha$ can be
deduced and justification $\beta$ is never violated then assume conclusion~$\gamma$''.
Default rules can be used to enrich calculi in different kinds of
logics. Here, we consider a variant of propositional formulas, namely,
formulas in conjunctive normal form~($\CNF$). A key concept of DL is that an application of default rules must not
lead to an inconsistency if conflicting rules are present, instead
such rules should be avoided if possible. This concept results in the notion
of \emph{stable extensions}, which can be seen as a maximally consistent view of an agent with respect to his knowledge base together in combination with its set of default rules.
The corresponding decision problem,~i.e., the extension existence problem, then asks whether a given default theory has a \emph{consistent stable extension}, and is the problem of our interest.
The computationally hard part of this problem lies
in the detection of the order and ``applicability'' of default rules, which is a quite
challenging task as witnessed by its $\SigmaP{2}$-completeness. In 1992, Gottlob
showed that many important decision problems, beyond the extension existence problem,  of non-monotonic logics are complete for the second level of the polynomial hierarchy~\cite{gottlob92} and thus are of high intractability.

A prominent approach to understand the intractability of a problem is to
use the framework of parameterised complexity, which was introduced by
Downey and Fellows~\cite{DowneyFellows13,DowneyFellows99}. The main
idea of parameterised complexity is to fix a certain structural
property (the \emph{parameter}) of a problem instance and to consider the
computational complexity of the problem in dependency of the
parameter. Then ideally, the complexity drops and the problem becomes
solvable in polynomial time when the parameter is fixed. Such problems
are called \emph{fixed-parameter tractable} and the corresponding
parameterised complexity class, which contains all fixed-parameter
tractable problems, is called~$\FPT$.
For instance, for the propositional satisfiability problem~(\SAT) one
(naïve) parameter is the number of variables of the given
formula. Then, for a given formula~$\varphi$ of size~$n$ and $k$
variables its satisfiability can be decided in time $O(n\cdot2^{k})$,~i.e., polynomial (even linear) runtime in $n$ if $k$ is considered to be fixed.

The invention of new parameters can be quite challenging, however, $\SAT$ has so far been considered under many different parameters~\cite{sz03,SamerSzeider08c,ccfxjkg04,ops13}.
A concept that provides a parameter and has been widely used in
theoretical investigations of propositional satisfiability are
backdoors~\cite{WilliamsGomesSelman03,gs12,GottlobSzeider08}.
The size of a
backdoor can be seen as a parameter with which one tries to exploit a
small distance of a formula from being tractable.
More detailed, given a class~$\form$ of formulas and a
formula~$\varphi$, a subset~$B$ of its variables is a \emph{strong
  $\form$-backdoor} if the formula~$\phi$ under every truth assignment
over $B$ yields a formula that belongs to the class~$\form$.
Using backdoors usually consists of two phases: (i)~finding a backdoor
(backdoor detection) and (ii)~using the backdoor to solve the problem
(backdoor evaluation).
If $\form$ is a class where $\SAT$ is tractable and backdoor detection
is fixed-parameter tractable for this class, like the class of all
Horn or Krom formulas, we can immediately conclude that $\SAT$ is
fixed-parameter tractable when parameterised by the size of a smallest
strong $\form$\hy backdoor.

\paragraph{Related Work.}%
Backdoors for propositional satisfiability have been introduced by
Williams, Gomes, and Selman~\cite{WilliamsGomesSelman03,WilliamsGomesSelman03a}. The
concept of backdoors has recently been lifted to some non-monotonic
formalisms as abduction~\cite{PfandlerRummeleSzeider13}, answer set
programming~\cite{fs15,fs15b}, and argumentation~\cite{DvorakOrdyniakSzeider12}. Beyond the classification of Gottlob \cite{gottlob92}, the complexity of fragments, in the sense of Post's lattice, has been considered by Beyersdorff et~al.\ extensively for default logic \cite{bmtv12}, and for autoepistemic logic by Creignou et~al.~\cite{cmtv12}. Also parameterised analyses of non-monotonic logics in the spirit of Courcelle's theorem~\cite{courcelle90OLD,courcelle90} have recently been considered by Meier et~al.~\cite{msstv15}. Further, Gottlob et~al.\ studied treewidth as a parameter for various non-monotonic logics~\cite{gopiwe10} and also considered a more CSP focused non-monotonic context within the parameterised complexity setting~\cite{gss02}.

\paragraph{Contribution.}%
In this paper, we introduce a notion of backdoors to propositional
default logic and study structural properties therein. 
Then we investigate the parameterised complexity of the problems of backdoor detection (parameterised by the solution size) and evaluation (parameterised by the size of the given backdoor), with respect to the most important classes of CNF
formulas,~e.g., $\CNF$, $\krom$, $\horn$, $\monotone$, and $\id$.
Informally, given a formula $\varphi$ and an integer $k$, the detection problem
asks whether there exists a backdoor of size $k$ for $\varphi$. Backdoor evaluation
then exploits the distance~$k$ for a target formula class to solve the
problem for the starting formula class with a ``simpler''
complexity. Our classification shows that detection is fixed-parameter
tractable for all considered target classes. However, for backdoor
evaluation starting at $\CNF$ the parameterised complexity depends, as expected, on the target class: the
parameterised complexity then varies between $\para\DeltaP2$ ($\monotone$), $\para\NP$ ($\krom, \horn$), and $\FPT$ ($\id$).

\section{Preliminaries}
We assume familiarity with standard notions in computational
complexity, the complexity classes~$\P$ and $\NP$ as well as the
polynomial hierarchy. For more detailed information, we refer to other standard 
sources~\cite{Papadimitriou94,flgr06,DowneyFellows13}.

\paragraph{Parameterised Complexity.} %
We follow the notion by Flum and Grohe \cite{fg03}. 
A \emph{parameterised (decision) problem}~$L$ is a subset of
$\Sigma^*\times\N$ for some finite alphabet~$\Sigma$.
%
%
%
  Let $C$ be a classical complexity class, then $\para C$ consists of
  all parameterised problems~$L\subseteq \Sigma^* \times \N$, for
  which there exists an alphabet~$\Sigma'$, a computable function
  $f\colon \N\to\Sigma'^*$, and a (classical) problem
  $L'\subseteq \Sigma^*\times\Sigma'^*$ such that (i)~$L'\in C$, and
  (ii)~for all instances $(x,k)\in\Sigma^*\times\N$ of $L$ we have
  $(x,k)\in L$ if and only if $(x,f(k))\in L'$.
%
%
  For the complexity class~$\P$, we write $\FPT$ instead of
  $\para\P$. We call a problem in $\FPT$ \emph{fixed-parameter
    tractable} and the runtime $f(k)\cdot|x|^{O(1)}$ also \emph{fpt-time}.
  Additionally, the parameterised counterparts of $\NP$ and
  $\DeltaP{2}=\P^\NP$, which are denoted by $\para\NP$ and
  $\para\DeltaP2$, are relevant in this paper.

\paragraph{Propositional Logic.} 
Next, we provide some notions from propositional logic. We consider a finite set of propositional variables 
and use the symbols~$\top$ and $\bot$ in the standard way. 
A \emph{literal} is a variable~$x$ (positive literal) or its
negation~$\neg x$ (negative literal). A \emph{clause} is a finite set
of literals, interpreted as the disjunction of these literals.  A
propositional formula in \emph{conjunctive normal form (CNF)} is a
finite set of clauses, interpreted as the conjunction of its clauses.
We denote the class of all CNF formulas by $\CNF$.  A clause is
\emph{Horn} if it contains at most one positive literal, \emph{Krom}
if it contains two literals, \emph{monotone} if it contains only
positive literals, and \emph{positive-unit} if it contains at most one
positive literal. We say that a CNF formula has a certain property if
all its clause have the property.  We consider several classes
of formulas in this paper. Table~\ref{tbl:forms} gives an overview on
these classes and defines clause forms for these classes.

\begin{table}[t]
	$$\shortversion{\begin{array}{l@{\hspace{0.5em}}p{4.8cm}l}\toprule}
	\longversion{\begin{array}{l@{\hspace{0.5em}}p{5cm}l}\toprule}
		\text{class} &clause description & \text{clause forms}\\\midrule
		$\CNF$ &no restrictions& \{\ell^+_1,\dots,\ell^+_n,\ell^-_1,\dots,\ell^-_m)\\
		$\horn$&at most one positive literal &\{\ell^+,\ell^-_1,\dots,\ell^-_n\},\{\ell^-_1,\dots,\ell^-_m\}\\
		$\krom$&binary clauses&\{\ell_1^+,\ell^+_2\},\{\ell^+,\ell^-\},\{\ell^-_1,\ell^-_2\}\\
		$\monotone$&no negation, just positive literals&\{\ell^+_1,\dots,\ell_n^+\}\\
		$\id$&only positive unit clauses&\{\ell^+\}\\
	\bottomrule
	\end{array}$$
	
	\caption{Considered normal forms. In the last row, $\ell^+_i$ denote positive, and $\ell^-_i$ negative literals; and $n$ and $m$ are integers such that $n,m\ge0$. }\label{tbl:forms}
\end{table}
%
%

A formula~$\varphi'$ is a \emph{subformula} of a $\CNF$ formula~$\varphi$ (in symbols $\varphi' \subseteq \varphi$) if for each clause~$C' \in \varphi'$ there is some clause~$C \in \varphi$ such that $C' \subseteq C$. We call a class~$\form$ of $\CNF$ formulas \emph{clause-induced} if whenever $F\in \form$, all subformulas $F'\subseteq F$ belong to $\form$. 
Note that all considered target classes in this paper are clause-induced.

Given a formula~$\varphi\in\CNF$, and a
subset~$X\subseteq\Vars\varphi$, then a \emph{(truth) assignment} is a
mapping~$\theta\colon X\to\{0,1\}$. The \emph{truth (evaluation)} of
propositional formulas is defined in the standard way, in particular,
$\theta(\bot) = 0$ and $\theta(\top)=1$.
We extend $\theta$ to literals by setting
$\theta(\neg x) = 1 - \theta(x)$ for $x \in X$.
By $\A(X)$ we denote the set of all
assignments~$\theta: X \to \{0,1\}$. For simplicity of presentation,
we sometimes identify the set of all assignments by its corresponding
literals,~i.e.,
$\A(X)=\set{\{\ell_1,\dots,\ell_{|X|}\}}{x\in X, \ell_i\in\{x,\lnot
  x\}}$. We write $\varphi[\theta]$ for the \emph{reduct} of $\phi$
where every literal~$\ell \in X$ is replaced by $\top$ if
$\theta(\ell)=1$, 
%
%
then all clauses that contain a literal~$\ell$ with $\theta(\ell) = 1$
are removed and from the remaining clauses all literals~$\ell'$ with
$\theta(\ell')=0$ are removed.  We say $\theta$ \emph{satisfies} $\phi$ if
$\varphi[\theta]\equiv\top$, $\phi$ is \emph{satisfiable} if there
exists an assignment that satisfies $\phi$, and $\phi$ is
\emph{tautological} if all assignments~$\theta \in \A(X)$ satisfy
$\phi$.
%
%
%
%
Let $\varphi,\psi\in\CNF$ and $X = \Vars{\phi} \cup \Vars{\psi}$. We
write $\varphi\models\psi$ if and only if for all assignments
$\theta \in \A(X)$ it holds that all assignments~$\theta$ that satisfy
$\phi$ also satisfy $\psi$.
%
Further, we define
$\Th{\varphi}\dfn\set{\psi\in\CNF}{\varphi\models\psi}$.
%
%

%

Note that any assignment $\theta\colon\Vars\varphi\to\{0,1\}$ can be
also represented by the CNF formula
$\bigwedge_{\theta(x)=1}x\land\bigwedge_{\theta(x)=0}\lnot x$.
Therefore, we often write $\theta\models\varphi$ if
$\varphi[\theta]\equiv\top$ holds.

We denote with $\SAT(\form)$ the problem, given a propositional formula $\phi\in\form$ asking whether $\phi$ is satisfiable. The problem $\TAUT(\form)$ is defined over a given formula $\phi\in\form$ asking whether $\phi$ tautological.


\subsection{Default Logic} 
%
We follow notions by Reiter \cite{reiter:80} and define a
\emph{default rule} $\delta$ as a
triple~$\frac{\alpha:\beta}{\gamma}$; $\alpha$ is called the
\textit{prerequisite}, $\beta$ is called the \textit{justification},
and $\gamma$ is called the \textit{conclusion}; we set
$\prereq(\delta)\dfn\alpha$, $\just(\delta)\dfn \beta$, and
$\concl(\delta)\dfn \gamma$.
%
%
%
%
%
If $\form$ is a class of formulas, then $\frac{\alpha:\beta}{\gamma}$
is an $\form$-default rule if $\alpha,\beta,\gamma\in\form$. An
$\form$-\textit{default theory} $\encoding{W,D}$ consists of a set of
propositional formulas $W\in \form$ and a set~$D$ of $\form$-default
rules. 
We sometimes call $W$ the
\emph{knowledge base} of $\encoding{W,D}$.  Whenever we do not
explicitly state the class~$\form$, we assume it to be $\CNF$.

\begin{definition}[Fixed point semantics, \cite{reiter:80}]
  Let $\encoding{W,D}$ be a default theory and $E$ be a set of
  formulas. Then $\Gamma(E)$ is the smallest set of formulas such that:
  \begin{enumerate}
  \item $W\subseteq\Gamma(E)$,
  \item $\Gamma(E)=\Th{\Gamma(E)}$, and 
  \item for each 
    $\frac{\alpha:\beta}{\gamma}\in D$ with $\alpha\in\Gamma(E)$ and
    $\lnot\beta\notin E$, 
    it holds that $\gamma\in\Gamma(E)$.
  \end{enumerate}
  $E$ is a \emph{stable extension} of $\encoding{W,D}$, if
  $E=\Gamma(E)$.
  %
  %
An extension is \emph{inconsistent} if it contains $\bot$, otherwise it is called \emph{consistent}. 
\end{definition}

A definition for stable extensions beyond fixed point semantics, which
has been introduced by Reiter~\cite{reiter:80} as well, uses the
principle of a stage construction.

\begin{proposition}[Stage construction, \cite{reiter:80}]\label{prop:stageConstruction}
  Let $\encoding{W,D}$ be a default theory and $E$ be a set of
  formulas. Then define $E_0\dfn W$ and 
  %
  %
  \[E_{i+1} \dfn \Th{E_i}\cup\left\{\,\gamma\;\Bigg|\;
      \frac{\alpha:\beta}{\gamma}\in D,\alpha\in E_i\text{ and }
      \lnot\beta\notin E\,\right\}.\]
  $E$ is a \emph{stable extension} of $\encoding{W,D}$ if and only if
  $E=\bigcup_{i\in\N}E_i$. The set
  %
  %
  %
  \[G=\left\{\,\frac{\alpha:\beta}{\gamma}\in D\;\Bigg|\; \alpha\in
      E\land\lnot\beta\notin E\,\right\}\]
  is called the set of \emph{generating defaults}. If $E$ is a stable
  extension of $\encoding{W,D}$, then
  $E=\Th{W\cup \set{\concl(\delta)}{\delta\in G}}$.
\end{proposition}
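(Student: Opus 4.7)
The plan is to reduce everything to proving that the set $F \dfn \bigcup_{i\in\N} E_i$ produced by the stage construction coincides with the operator output $\Gamma(E)$ from the fixed-point semantics; once $F=\Gamma(E)$ is established, the equivalence $E = \Gamma(E) \iff E = \bigcup_i E_i$ is immediate. I would prove the two inclusions $F\subseteq\Gamma(E)$ and $\Gamma(E)\subseteq F$ separately, exploiting the fact that $\Gamma(E)$ is defined as the \emph{smallest} set fulfilling conditions (1)--(3) of the fixed-point definition.

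The inclusion $F\subseteq\Gamma(E)$ follows by induction on $i$. For $i=0$, $E_0 = W \subseteq \Gamma(E)$ by condition~(1). For the step, assume $E_i\subseteq\Gamma(E)$: then $\Th{E_i}\subseteq\Th{\Gamma(E)}=\Gamma(E)$ by condition~(2), and whenever $\frac{\alpha:\beta}{\gamma}\in D$ with $\alpha\in E_i\subseteq\Gamma(E)$ and $\lnot\beta\notin E$, condition~(3) forces $\gamma\in\Gamma(E)$; hence $E_{i+1}\subseteq\Gamma(E)$. The converse $\Gamma(E)\subseteq F$ is obtained by verifying that $F$ itself satisfies conditions~(1)--(3) and invoking minimality of $\Gamma(E)$. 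Condition~(1) is trivial since $W=E_0\subseteq F$, and condition~(3) is straightforward: if $\alpha\in F$, then $\alpha\in E_i$ for some $i$, so $\gamma\in E_{i+1}\subseteq F$ whenever $\lnot\beta\notin E$. The main obstacle here is condition~(2), namely $\Th{F}\subseteq F$. This is where compactness of propositional logic enters: any $\psi\in\Th{F}$ is entailed by a finite subset of $F$, which therefore lies in some single $E_i$, whence $\psi\in\Th{E_i}\subseteq E_{i+1}\subseteq F$. Without compactness, the union of deductive closures need not be deductively closed, so this step is the real content of the argument.

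It remains to prove the final claim about generating defaults under the assumption that $E=\bigcup_i E_i$ is a stable extension. The inclusion $\Th{W\cup\set{\concl(\delta)}{\delta\in G}}\subseteq E$ is easy: $W\subseteq E$, and for each $\delta\in G$ the conclusion $\concl(\delta)$ belongs to $E$ by the very definition of the stage construction, after which deductive closure of $E$ (which follows from $E=\Gamma(E)=\Th{\Gamma(E)}$) closes the argument. For the reverse inclusion $E\subseteq\Th{W\cup\set{\concl(\delta)}{\delta\in G}}$, I would again use induction on $i$ to show $E_i\subseteq \Th{W\cup\set{\concl(\delta)}{\delta\in G}}$; the base case is immediate and the step uses that every default triggered at stage $i+1$ has its prerequisite in $E_i\subseteq E$ and its negated justification outside $E$, so it lies in $G$ and its conclusion is in the target set. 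Combining both inclusions gives the stated equality.
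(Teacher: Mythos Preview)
The paper does not prove this proposition at all: it is stated as a classical result attributed to Reiter~\cite{reiter:80} and used as a black box throughout. So there is no ``paper's own proof'' to compare against.

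That said, your argument is correct and is essentially the standard textbook proof of Reiter's stage characterisation. The key points---induction on $i$ for $E_i\subseteq\Gamma(E)$, verifying that $F=\bigcup_i E_i$ satisfies the three closure conditions so that minimality of $\Gamma(E)$ gives the reverse inclusion, and invoking compactness to show $F$ is deductively closed---are exactly what is needed. One small remark: when you argue that a finite subset of $F$ lies in a single $E_i$, you implicitly use that the sequence $(E_i)_i$ is increasing; this is immediate from $E_i\subseteq\Th{E_i}\subseteq E_{i+1}$, but it is worth stating explicitly. The treatment of the generating-defaults claim is also fine.
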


\begin{example}\label{ex:running}
  \sloppypar
  Let $W=\emptyset$, $W'=\{x\}$,
  $D_1=\{\frac{x:y}{\lnot y},\frac{\lnot x:y}{\lnot y}\}$, and
  $D_2=\{\frac{x:z}{\lnot y},\frac{x:y}{\lnot z}\}$.
  %
  %
  %
  %
  The default theory $\encoding{W,D_1}$ has only the stable
  extension
  $\Th{W}$.
  The default theory $\encoding{\{x\},D_1}$ has no stable extension.
  The default theory $\encoding{\{x\},D_2}$ has the stable
  extensions
  $\Th{\{x,\lnot y\}}$ and $\Th{\{x,\lnot z\}}$.
\end{example}

The following example illustrates that a default theory might contain ``contradicting'' 
default rules that cannot be avoided in the process of determining extension existence.  Informally, such default
rules prohibit stable extensions. Note that there are also less obvious situations where
``chains'' of such default rules interact with each other.

\begin{example}\label{ex:contradicting}
  Consider $W'$ and $D_2$ from Example~\ref{ex:running} and
  let~$D_2'=D_2 \cup \{\frac{\true:\beta}{\lnot\beta}\}$ for some formula $\beta$.  The default
  theory~$\encoding{W',D_2'}$ has no stable extension $\Th{W}$ unless $W\cup\{\lnot y\}\models\lnot\beta$ or $W\cup\{\lnot z\}\models\lnot\beta$.
\end{example}

Technically, the definition of stable extensions allows inconsistent stable extensions. However, Marek and Truszczyński have shown that inconsistent extensions only occur if the set~$W$ is already inconsistent where $\encoding{W,D}$ is the theory of interest \cite[Corollary 3.60]{matr93}. An immediate consequence of this result explains the interplay between consistency and stability of extensions more subtle: (i) If $W$ is consistent, then every stable extension of $\encoding{W,D}$ is consistent, and (ii) If $W$ is inconsistent, then $\encoding{W,D}$ has a stable extension. In Case~(2) the stable extension consists of all formulas~$\allPropForm$.
Hence, it makes sense to consider only consistent stable extensions as
the relevant ones. Moreover, we refer by $\SE{\encoding{W,D}}$ to the set
of all consistent stable extensions of~$\encoding{W,D}$.


A main computational problem for DL is the \emph{extension existence
  problem}, defined as follows where $\form$ is a class of
propositional formulas:
\begin{samepage}
  \decisionproblem {\Ext(\form)} {An $\form$\hy default theory
    $\langle W, D \rangle$.}  {Does $\langle W, D \rangle$ have a
    consistent stable extension?}
\end{samepage}

%

The following proposition summarises relevant results for the
extension existence problem for certain classes of formulas.

\begin{proposition}\label{prop:EXT-fragments}
  ~\\[-1.3em]
  \begin{enumerate}
  \item $\Ext(\CNF)$ is $\SigmaP{2}$-complete~\cite{gottlob92}.
  \item $\Ext(\horn)$ is
    $\NP$-complete~\cite{stillman,stillmanhorn-nud}.
  \item $\Ext(\id)\in\P$~\cite{bmtv12}.
  \end{enumerate}
\end{proposition}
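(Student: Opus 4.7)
The three parts are independent results from the literature; the plan is to sketch each self-contained using the stage-construction characterisation of Proposition~\ref{prop:stageConstruction}, which reduces deciding extension existence to identifying the set of generating defaults of a candidate stable extension.

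For part~(1), I would show membership in $\SigmaP{2}$ by a guess-and-verify algorithm: nondeterministically guess a subset $G\subseteq D$ and, using an $\NP$ oracle, verify the three conditions that characterise $G$ as the generating set of a consistent stable extension $E=\Th{W\cup\{\,\concl(\delta):\delta\in G\,\}}$, namely that $W\cup\{\,\concl(\delta):\delta\in G\,\}$ is satisfiable, that $E\models\prereq(\delta)$ and $E\not\models\lnot\just(\delta)$ for every $\delta\in G$, and that for every $\delta\in D\setminus G$ at least one of these conditions fails. Each check reduces to $\SAT$ (or its complement) on $\CNF$ formulas. For the matching hardness I would reduce from $\exists\forall$-$3\SAT$: given $\exists X\,\forall Y\,\psi(X,Y)$, use pairs of conflicting defaults to nondeterministically pick an $X$-assignment, and add a rule whose justification blocks the candidate extension exactly when $\psi$ fails on some $Y$, following the construction of Gottlob~\cite{gottlob92}.

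For part~(2), the same guess-and-verify scheme applies, but every oracle call is now a Horn-entailment question, and therefore polynomial-time decidable. Hence the verification step runs in $\P$ and the overall algorithm is in $\NP$. For hardness I would reduce from $\SAT$ by introducing, for each propositional variable~$x$, a pair of defaults of the form $\frac{\top:x}{x}$ and $\frac{\top:\lnot x}{\lnot x}$ to force a consistent truth guess, while encoding each clause through a suitable $\horn$ implication in the knowledge base that makes the guessed extension inconsistent unless the clause is satisfied, as in Stillman~\cite{stillman}.

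For part~(3) the argument is direct: since every formula in $W$ and every component of a default rule is a conjunction of positive unit clauses, a consistent set of positive units can never entail the negation of a positive literal, and thus never entails $\lnot\beta$ for any $\id$-justification~$\beta$. Hence no default is ever blocked by its justification, and the stage construction collapses to the monotone iteration $E_{i+1}=E_i\cup\{\,\concl(\delta):\delta\in D,\ \prereq(\delta)\in E_i\,\}$, which stabilises after at most $|D|$ rounds and is computable in polynomial time. The main obstacle in this plan is the $\SigmaP{2}$-hardness reduction of part~(1), where the quantifier alternation of $\exists X\,\forall Y$ must be faithfully simulated by the interaction between the guessing defaults and the justification-blocking default; once that is in place, parts~(2) and~(3) follow from the same stage-construction template by standard simplifications.
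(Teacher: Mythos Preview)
The paper does not prove this proposition at all: it is stated purely as a summary of known results, with the three items attributed to Gottlob~\cite{gottlob92}, Stillman~\cite{stillman,stillmanhorn-nud}, and Beyersdorff et~al.~\cite{bmtv12}, respectively, and no argument is given in the paper itself. So there is no ``paper's proof'' to compare your plan against; what you have written is a reasonable reconstruction of the arguments in the cited sources, and the stage-construction template you use for all three parts is exactly the right organising principle.

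One point in your sketch deserves tightening. In part~(2) you propose to encode each SAT clause ``through a suitable $\horn$ implication in the knowledge base that makes the guessed extension inconsistent unless the clause is satisfied.'' This does not work as stated: an arbitrary clause such as $x\lor y\lor z$ is not Horn, and placing a non-Horn formula in $W$ would leave the $\horn$ fragment. Stillman's reduction (and the variant used in~\cite[Lemma~5.6]{bmtv12}) instead encodes each clause via an additional \emph{default} whose prerequisite is the Horn conjunction of the negated literals of the clause (e.g.\ $\lnot x\land\lnot y\land\lnot z$) and whose conclusion destroys any candidate extension; all components of such a default are Horn. With that fix your plan for~(2) goes through. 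Your arguments for~(1) and~(3) are correct; note that your reasoning for~(3) in fact shows that every $\id$-default theory has a consistent stable extension, so $\Ext(\id)$ is not merely in $\P$ but trivial.
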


\subsection{The Implication Problem}

The implication problem is an important (sub\hy)problem when reasoning
with default theories. In the following, we first formally introduce
the implication problem for classes of propositional formulas, and
then state its (classical) computational complexity for the
classes~$\horn$ and $\krom$.

\decisionproblem
{$\IMP(\form)$}
{A set~$\Phi$ of $\form$-formulas and a formula~$\psi\in\form$.}
{Does $\Phi \models \psi$ hold?}

Beyersdorff et~al.\ \cite{bmtv12} have considered all Boolean fragments of $\IMP(\form)$  and completely classified its computational complexity concerning the framework of Post's lattice. However, Post's lattice talks only about restrictions on allowed Boolean functions. Since several subclasses  of $\CNF$, like $\horn$ or $\krom$, use the Boolean functions ``$\land$'',''$\lnot$'', and ``$\lor$'', such classes are unrestricted from the perspective of Post's lattice. Still, efficient algorithms are known for such classes from propositional satisfiability. The next results state a similar behaviour for the implication problem. \iflong\else The proof can be found in the report \cite{report}.\fi

\begin{lemma}\label{lem:KromImpInP}
  $\IMP(\krom)\in\P$.
\end{lemma}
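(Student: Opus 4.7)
The plan is to reduce $\IMP(\krom)$ to polynomially many instances of \textsc{2-Sat}, which is well known to be decidable in (linear) polynomial time, e.g.\ by the algorithm of Aspvall, Plass, and Tarjan.

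First, I would use the standard reformulation $\Phi \models \psi$ iff $\Phi \cup \{\neg\psi\}$ is unsatisfiable, and then split on the clauses of~$\psi$. Writing $\psi = C_1 \land \dots \land C_m$ with each $C_i$ a Krom clause, we have $\Phi \models \psi$ if and only if $\Phi \models C_i$ for every $i \in \{1,\dots,m\}$, because an assignment violating $\psi$ must violate at least one of its clauses. So it suffices to decide $\Phi \models C$ for an individual Krom clause~$C$ in polynomial time, and run this test $m$ times.

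Second, fix such a clause $C = \ell_1 \lor \ell_2$ (the case of a unit clause or the empty clause is analogous and only simpler). Then $\Phi \not\models C$ iff there is an assignment $\theta$ satisfying every formula in $\Phi$ together with $\neg \ell_1$ and $\neg \ell_2$. The conjunction $(\bigwedge \Phi) \land \neg\ell_1 \land \neg\ell_2$ is still a Krom formula: each member of $\Phi$ is Krom by assumption, and the two additional literals form unit clauses, which trivially have at most two literals. Satisfiability of this Krom formula can be decided in polynomial time by any standard \textsc{2-Sat} procedure.

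Putting the pieces together, the decision procedure for $\IMP(\krom)$ runs the \textsc{2-Sat} algorithm once per clause of~$\psi$ on a formula of size linear in $|\Phi| + |\psi|$, yielding an overall polynomial-time algorithm. There is really no hard step here; the only care needed is to treat unit clauses and the empty clause (where $\neg C$ contributes zero literals and $\Phi \models C$ trivially fails iff $\Phi$ is satisfiable) as boundary cases, and to verify that adding unit clauses to a Krom formula keeps the clause-size bound, so that the input to the \textsc{2-Sat} oracle is genuinely Krom.
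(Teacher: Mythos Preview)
Your proof is correct and follows essentially the same route as the paper: split $\psi$ into its clauses, and for each clause $C=\ell\lor\ell'$ reduce $\Phi\models C$ to a single $\krom$ satisfiability check by forcing both literals of $C$ to false. The only cosmetic difference is that the paper phrases the last step via the reduct $(\bigwedge_i C_i)[\theta_0]$ with $\theta_0(\ell)=\theta_0(\ell')=0$, whereas you conjoin the unit clauses $\lnot\ell_1,\lnot\ell_2$; both yield the same 2-\textsc{Sat} instance.
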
\iflong
\begin{proof}
  Given a set $\Phi$ of $\krom$-formulas and a formula $\psi\in\krom$. Without loss of generality assume that $\bigwedge_{\varphi\in\Phi}\varphi = \bigwedge_{i=1}^m C_i$, and $\phi = \bigwedge_{i=1}^n C'_i$. Then it holds that
  \begin{align}
    \encoding{\Phi,\psi}\in\IMP(\krom) &\Leftrightarrow \left(\bigwedge_{i=1}^m C_i, \bigwedge_{i=1}^n C'_i\right)\in\IMP(\krom)\displaybreak[1]\\
		&\Leftrightarrow \left(\bigwedge_{i=1}^m C_i\right)\to \left(\bigwedge_{i=1}^n C'_i\right)\in\TAUT\displaybreak[1]\\
		&\Leftrightarrow \bigwedge_{i=1}^n \left(\bigwedge_{j=1}^m C_j\to C'_i\right)\in\TAUT\displaybreak[1]\\
		&\Leftrightarrow\forall\; 1\leq i\leq n \left(\bigwedge_{j=1}^m C_j\to C'_i\right)\in\TAUT\displaybreak[1]\\
		&\Leftrightarrow\lnot\exists\;1\leq i\leq n \left(\bigwedge_{j=1}^m C_j\to C'_i\right)\not\in\TAUT
  \end{align}	
(1) definition of the implication problem. (2) expressing implication through the propositional function $\to$. (3) $\alpha\to \beta\land\gamma$ is a tautology if and only if $(\alpha\to\beta)\land(\alpha\to\gamma)$ is a tautology. (4) separated to separate tautology questions. (5) $\alpha\land\beta$ is a tautology if neither $\alpha$ nor $\beta$ is not a tautology.

Now, we can check the last $n$ problems separately by
\begin{align*}
	\left(\bigwedge_{i=1}^m C_i\to (\ell\lor\ell')\right)\not\in\TAUT
	&\Leftrightarrow \left(\bigwedge_{i=1}^m C_i\right)[\theta_0]\in\SAT(\krom),
\end{align*}
where $\theta_0$ is the assignment such that $\theta_0(\ell)\dfn0$ and $\theta_0(\ell')\dfn0$. Observe that, if $\ell\equiv \sim\!\!\ell'$ then the implication on the left part of the equivalence is always a tautology.\shortversion{\qed}
\end{proof}\fi

Similar to the proof of Lemma~\ref{lem:KromImpInP} one can show the
same complexity for the implication problem of $\horn$
formulas. However, its complexity is already known from the work by
Stillman~\cite{stillman}.
\begin{proposition}[{\cite[Lemma 2.3]{stillman}}]\label{prop:imp-horn-p}
	$\IMP(\horn)\in\P$.
\end{proposition}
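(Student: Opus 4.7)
The plan is to mirror the proof of Lemma~\ref{lem:KromImpInP} almost verbatim, replacing Krom-specific reasoning with the Horn-specific fact that Horn-SAT is in $\P$. Given an instance $(\Phi, \psi)$ of $\IMP(\horn)$, I would first write $\bigwedge_{\varphi \in \Phi} \varphi = \bigwedge_{j=1}^m C_j$ and $\psi = \bigwedge_{i=1}^n C'_i$, where each $C_j$ and each $C'_i$ is a Horn clause. By the same chain of equivalences as in Lemma~\ref{lem:KromImpInP}, $\Phi \models \psi$ holds iff for every $i \in \{1,\dots,n\}$ the propositional formula $\bigwedge_{j=1}^m C_j \to C'_i$ is a tautology, iff $\bigwedge_{j=1}^m C_j \land \lnot C'_i$ is unsatisfiable for every $i$.

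The key observation is that, since $C'_i$ is a Horn clause, its negation $\lnot C'_i$ is a conjunction of literals, hence can be viewed as a partial truth assignment $\theta_i$ to $\Vars{C'_i}$ (setting the at-most-one positive literal of $C'_i$ to $0$ and each negated variable of $C'_i$ to $1$). Consequently, each check reduces to asking whether $(\bigwedge_{j=1}^m C_j)[\theta_i]$ is satisfiable. Because the $\horn$ class is clause-induced and the reduct operation only deletes clauses (those satisfied by $\theta_i$) or drops literals from clauses (those falsified by $\theta_i$), the reduct $(\bigwedge_j C_j)[\theta_i]$ is again a Horn formula.

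At this point I would invoke the classical result that $\SAT(\horn) \in \P$ (via positive unit resolution, for example). Performing the $n$ satisfiability checks thus takes total polynomial time in the size of $(\Phi,\psi)$, which yields $\IMP(\horn) \in \P$.

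I expect essentially no genuine obstacle: the proof is a direct structural analogue of the Krom argument, the only point meriting a sentence of justification is that the reduct of a Horn CNF under an assignment remains Horn, and this is immediate from the clause-induced property of $\horn$ already noted in the preliminaries.
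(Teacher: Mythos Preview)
Your proposal is correct. The paper does not actually give its own proof of this proposition: it cites the result from Stillman and remarks just before the statement that ``[s]imilar to the proof of Lemma~\ref{lem:KromImpInP} one can show the same complexity for the implication problem of $\horn$ formulas''---which is exactly the route you take.
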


\section{Strong Backdoors} 

In this section, we lift the concept of backdoors to the world of
default logic. First, we review backdoors from the propositional
setting~\cite{WilliamsGomesSelman03,WilliamsGomesSelman03a}, where a
backdoor is a subset of the variables of a given formula.
Formally, for a class~$\form$ of formulas and a formula~$\phi$, a
\emph{strong $\form$\hy backdoor} is a set~$B$ of variables such that
for all assignments~$\theta\in\A(B)$, it holds that
$\phi[\theta]\in\form$.

Backdoors in propositional satisfiability follow the binary character of truth assignments. Each variable of a given formula is considered to be either true or false.  However, reasoning in default logic has a ternary character. When we consider consistent stable extensions of a given default theory then one of the following three cases holds for some formula~$\varphi$ with respect to an extension~$E$: (i)~$\varphi$ is contained in $E$, (ii)~the negation $\lnot\varphi$ is contained in $E$, or (iii)~neither $\varphi$ nor $\lnot\varphi$ is contained in $E$ (e.g., for the theory $\encoding{\{x\},D_2}$, from Example~\ref{ex:running}, neither $b$ nor $\lnot b$ is contained in any of the two stable extensions, where $b$ is a variable).
Since we need to weave this trichotomous point of view into a backdoor definition for default logic, the original definition of backdoors cannot immediately be transferred (from the SAT setting) to the scene of default logic. The first step is a notion of \emph{extended literals} and \emph{reducts}. The latter step can be seen as a generalisation of assignment functions to our setting.

\begin{definition}[Extended literals and reducts]
An \emph{extended literal} is a literal or a fresh variable~$x_\varepsilon$. For convenience, we further define $\sim\!\!\ell=x$ if $\ell=\lnot x$ and $\sim\!\!\ell=\lnot x$ if $\ell=x$. Given a formula $\phi$ and an extended literal $\ell$, then the \emph{reduct} $\reduct{\phi}{\ell}$ is obtained from $\phi$ such that
\begin{enumerate}
	\item \emph{if $\ell$ is a literal:} then all clauses that contain $\ell$ are deleted and all literals~$\sim\!\!\ell$ are deleted from all clauses,
	\item \emph{if $\ell$ is $x_\varepsilon$:} then all occurrences of literals $\lnot x, x$ are deleted from all clauses.
\end{enumerate}

Let $\encoding{W,D}$ be a default theory and $\ell$ an extended literal, then
\[\reduct{W,D}{\ell}:=\left(\reduct{W}{\ell},\left\{\frac{\reduct{\alpha}{\ell}:\reduct{\beta}{x}}{\reduct{\gamma}{\ell}\land y_i}\;\left|\;\delta_i=\frac{\alpha:\beta}{\gamma}\in D\right.\right\}\right),\] where $y_i$ is a fresh proposition, and $\reduct{W}{\ell}$ is $\bigcup_{\omega\in W}\reduct{\omega}{\ell}$.
\end{definition}
%
Later (in the proof of Lemma~\ref{lem:reducts}), we will see why we need the $y_i$s.

In the next step, we incorporate the notion of extended literals into sets of assignments. Therefore, we introduce  \emph{threefold assignment sets}. Let $X$ be a set of variables, then we define
\[
\tA(X):=\{\{a_1,\dots,a_{|X|}\}\mid x\in X \text{ and }a_i\in\{x, \lnot x,x_\varepsilon\}
\}.
\] 

Technically, $\A(X)\subsetneq\tA(X)$ holds. However, $\tA(X)$
additionally contains variables~$x_\varepsilon$ that will behave as ``don't
care'' variables encompassing the trichotomous reasoning
approach explained above.
For $Y\in\tA(X)$ the reduct~$\reduct{W,D}{Y}$ is the consecutive application of all $\reduct{\cdot}{y}$ for $y\in Y$ to $\encoding{W,D}$. Observe that the order in which we apply the reducts to $\encoding{W,D}$ is not important.

The following proposition states that implication of formulas is invariant under adding conjuncts of fresh variables to the premise.

\begin{proposition}\label{prop:fresh-y-is-good}
  Let $\varphi,\psi\in\CNF$ be two formulas and
  $y\notin\Vars\varphi\cup\Vars\psi$. Then $\varphi\models\psi$ if and
  only if $\varphi\land y\models\psi$.
\end{proposition}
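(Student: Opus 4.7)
My plan is to handle the two directions of the biconditional separately, with the forward direction being essentially immediate and the backward direction requiring a small extension-of-assignment argument to account for the enlarged variable set introduced by $y$.

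For the forward direction, I would assume $\varphi \models \psi$ and take any assignment $\theta$ over $\Vars\varphi \cup \Vars\psi \cup \{y\}$ that satisfies $\varphi \land y$. Then $\theta$ in particular satisfies $\varphi$, and since $y \notin \Vars\psi$ we can restrict $\theta$ to $\Vars\varphi \cup \Vars\psi$ without changing the satisfaction of $\varphi$; by the premise this restriction satisfies $\psi$, and adding back the value of $y$ does not affect $\psi$ since $y$ does not occur in it. Hence $\theta \models \psi$, giving $\varphi \land y \models \psi$.

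For the backward direction, I would assume $\varphi \land y \models \psi$ and take any assignment $\theta$ over $\Vars\varphi \cup \Vars\psi$ satisfying $\varphi$. I extend $\theta$ to an assignment $\theta'$ over $\Vars\varphi \cup \Vars\psi \cup \{y\}$ by setting $\theta'(y) := 1$; since $y$ does not occur in $\varphi$, we still have $\theta' \models \varphi$, and clearly $\theta'(y) = 1$, so $\theta' \models \varphi \land y$. By assumption $\theta' \models \psi$, and again because $y \notin \Vars\psi$ the restriction $\theta$ satisfies $\psi$. Hence $\varphi \models \psi$.

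There is no real obstacle here: the only thing to be careful about is that the definition of $\models$ in the paper quantifies over assignments of $\Vars\varphi \cup \Vars\psi$, so when passing between $\varphi$ and $\varphi \land y$ the underlying variable set changes, and the argument rests entirely on the fact that the truth value assigned to a variable not appearing in a formula is irrelevant to whether the formula is satisfied. I would state this last observation once explicitly, then let both directions follow without further computation.
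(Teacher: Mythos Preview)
Your argument is correct. The paper itself does not supply a proof for this proposition; it is stated without proof as an elementary observation, so there is no ``paper's own proof'' to compare against. Your two-direction treatment via restriction and extension of assignments, together with the remark that the truth of a formula is unaffected by values assigned to variables not occurring in it, is exactly the standard justification one would expect here.
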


Now we show that implication for $\CNF$ formulas that do not contain tautological clauses is invariant under the application of ``deletion reducts'' $\reduct{\cdot}{x_\varepsilon}$. \iflong\else The proof of the following results can be found in an extended version~\cite{report}.\fi

\begin{lemma}\label{lem:imp_invariant_deletion}
  Let $\psi,\varphi \in \CNF$ be two formulas that do not contain
  tautological clauses.  If $\psi\models\varphi$, then
  $\reduct{\psi}{x_\varepsilon}\models\reduct{\varphi}{x_\varepsilon}$
  for every variable $x\in\Vars{\varphi}\cup\Vars{\psi}$.
\end{lemma}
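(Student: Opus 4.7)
The plan is to take an arbitrary assignment $\theta$ on $\Vars{\reduct{\psi}{x_\varepsilon}} \cup \Vars{\reduct{\varphi}{x_\varepsilon}}$ with $\theta\models\reduct{\psi}{x_\varepsilon}$ and to show that $\theta$ already satisfies $\reduct{\varphi}{x_\varepsilon}$. The key device is to lift $\theta$ to two full assignments $\theta_0$ and $\theta_1$ on $\Vars{\psi}\cup\Vars{\varphi}$ by setting $x\mapsto 0$ respectively $x\mapsto 1$, to apply the hypothesis $\psi\models\varphi$ to both lifts, and then to push the resulting witnesses back through the deletion of $x$ and $\lnot x$.

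First, I would show that both lifts satisfy $\psi$. The point is that $\theta\models\reduct{\psi}{x_\varepsilon}$ forces every clause of $\psi$ to contain at least one literal distinct from $x$ and $\lnot x$ that is true under $\theta$ (otherwise the corresponding clause of $\reduct{\psi}{x_\varepsilon}$ would be the empty clause or would not be satisfied). In particular, the value assigned to $x$ is irrelevant for the satisfaction of $\psi$, so $\theta_0\models\psi$ and $\theta_1\models\psi$. The hypothesis then yields $\theta_0\models\varphi$ and $\theta_1\models\varphi$.

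Second, I would carry out a short case analysis on an arbitrary clause $C$ of $\varphi$, according to whether $x$ and $\lnot x$ appear in $C$. Tautology-freeness of $\varphi$ excludes the case that both occur. If neither occurs, then $\reduct{C}{x_\varepsilon}=C$ and $\theta$ satisfies it directly from $\theta_0\models C$. If only $x$ occurs, then $x$ evaluates to false under $\theta_0$, so the literal witnessing $\theta_0\models C$ is some $\ell\neq x$; since $\lnot x\notin C$, this $\ell$ survives in $\reduct{C}{x_\varepsilon}$, and its truth value under $\theta$ coincides with that under $\theta_0$. The case where only $\lnot x$ occurs is handled symmetrically via $\theta_1$. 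In every case $\theta\models\reduct{C}{x_\varepsilon}$, and hence $\theta\models\reduct{\varphi}{x_\varepsilon}$.

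The main (and essentially the only) subtlety is the final case distinction: without tautology-freeness, a clause of shape $\{x,\lnot x,\dots\}$ could force the witness under $\theta_0$ to be $\lnot x$ and the witness under $\theta_1$ to be $x$, so neither witness would survive the reduct. Everything else is routine bookkeeping about the domain of restrictions and extensions of assignments and can be dispatched in a couple of lines.
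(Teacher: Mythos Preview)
Your proof is correct and follows essentially the same approach as the paper's: extend the assignment~$\theta$ to both values of~$x$, apply $\psi\models\varphi$ to the extensions, and argue that the resulting satisfaction of~$\varphi$ descends to $\reduct{\varphi}{x_\varepsilon}$. The paper phrases this as a proof by contradiction and compresses your clause-by-clause case analysis into the single remark that ``the satisfiability of $\varphi$ is independent of setting $x$''; your treatment simply makes explicit where tautology-freeness of~$\varphi$ is used, but the content is the same.
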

\iflong
\begin{proof}
	Assume for contradiction that $\reduct{\psi}{x_\varepsilon}\not\models\reduct{\varphi}{x_\varepsilon}$. Then there exists an assignment~$\theta\colon\allowbreak\Vars{\reduct{\psi}{x_\varepsilon}}\cup\allowbreak\Vars{\reduct{\varphi}{x_\varepsilon}}\to\{0,1\}$ such that $\theta\models\reduct{\psi}{x_\varepsilon}$ but $\theta\not\models\reduct{\varphi}{x_\varepsilon}$.
	As $\theta\models\reduct{\psi}{x_\varepsilon}$ every arbitrary extension of $\theta$ satisfies $\psi$, in particular also any extension on $\{x\}\cup\Vars{\reduct{\psi}{x_\varepsilon}}\cup\Vars{\reduct{\varphi}{x_\varepsilon}}$. Denote such an extension by $\theta_x$. Yet, by $\psi\models\varphi$ we get $\theta_x\models\varphi$. As this holds for any arbitrary such $\theta_x$ the satisfiability of $\varphi$ is independent of setting $x$ wherefore $\theta_x\models\reduct{\varphi}{x_\varepsilon}$ as well. (Note that here it is crucial that we require $\varphi$ contain no tautological clauses.) As $x\notin\Vars{\reduct{\varphi}{x_\varepsilon}}$ holds we get $\theta\models\reduct{\varphi}{x_\varepsilon}$ which is a contradiction. Thus $\reduct{\psi}{x_\varepsilon}\models\reduct{\varphi}{x_\varepsilon}$.\shortversion{\qed}
\end{proof}
\fi

The next lemma shows that implication for $\CNF$ formulas is invariant under the application of reducts over $\A$. 

\begin{lemma}\label{lem:imp_inv_usual_reducts}
	Let $\psi,\varphi$ be two $\CNF$ formulas, and $X\subseteq\Vars{\psi}\cup\Vars\varphi$. If $\psi\models\varphi$, then  $\reduct{\psi}{Y}\models\reduct{\varphi}{Y}$ holds for every set $Y\in\A(X)$.
\end{lemma}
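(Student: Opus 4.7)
The plan is to reduce the claim to the single-literal case and then iterate, using the fact noted above that the order in which reducts are applied is immaterial. Thus it suffices to prove: for any literal $\ell$ over $X$, if $\psi \models \varphi$ then $\reduct{\psi}{\ell} \models \reduct{\varphi}{\ell}$.

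For the single-literal step, take $\ell = x$ (the case $\ell = \lnot x$ is symmetric). The key semantic observation is that $\reduct{\phi}{x}$ is exactly the usual partial-assignment simplification of $\phi$ under $x \mapsto 1$: any assignment $\theta$ on $\Vars{\reduct{\phi}{x}}$ satisfies $\reduct{\phi}{x}$ if and only if the extension $\theta'$ of $\theta$ that sets $\theta'(x) = 1$, and assigns arbitrary values to any remaining variables of $\phi$, satisfies $\phi$. This is immediate clause by clause: a clause of $\phi$ containing $x$ is dropped in the reduct and satisfied under $x \mapsto 1$ in $\phi$, while a clause containing $\lnot x$ has that literal stripped in the reduct, which is harmless because $\lnot x$ evaluates to $0$ under $\theta'$ and therefore never contributes to satisfying the original clause.

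Given this, suppose $\theta$ satisfies $\reduct{\psi}{x}$ on $\Vars{\reduct{\psi}{x}} \cup \Vars{\reduct{\varphi}{x}}$. Extend $\theta$ to $\theta'$ on $\Vars{\psi} \cup \Vars{\varphi}$ by setting $\theta'(x) = 1$ and assigning arbitrary values to any other missing variables. By the characterisation we have $\theta' \models \psi$; by the hypothesis $\psi \models \varphi$ we obtain $\theta' \models \varphi$; and by the characterisation applied again (now to $\varphi$) we conclude $\theta \models \reduct{\varphi}{x}$. This gives the desired single-literal statement.

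The general case then follows by a short induction on $|Y|$. The base case $Y = \emptyset$ is trivial. For the inductive step, pick any $\ell \in Y$ and write $Y = Y' \cup \{\ell\}$ with $\ell \notin Y'$. The inductive hypothesis yields $\reduct{\psi}{Y'} \models \reduct{\varphi}{Y'}$, and one further application of the single-literal case with $\ell$ gives $\reduct{(\reduct{\psi}{Y'})}{\ell} \models \reduct{(\reduct{\varphi}{Y'})}{\ell}$, which by order-independence of reducts equals $\reduct{\psi}{Y} \models \reduct{\varphi}{Y}$. The only mild subtlety is the bookkeeping of variable domains, since a variable may be eliminated from $\psi$ but still occur in $\varphi$ or vice versa; handling this requires extending partial assignments arbitrarily over the missing variables, which the usual definition of implication over $\Vars{\psi} \cup \Vars{\varphi}$ permits. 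I expect this bookkeeping to be the most error-prone part of the write-up, but it presents no genuine mathematical obstacle. Note also that, in contrast to Lemma~\ref{lem:imp_invariant_deletion}, no hypothesis on tautological clauses is needed here, because literal reducts correspond to a concrete truth value for $x$ rather than the ``don't care'' collapse produced by $x_\varepsilon$.
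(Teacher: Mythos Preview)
Your proof is correct and rests on the same semantic observation as the paper's: applying a literal reduct $\reduct{\phi}{\ell}$ is equivalent to fixing the truth value of the corresponding variable, so that $\theta\models\reduct{\phi}{\ell}$ iff the extension of $\theta$ by $\ell\mapsto 1$ satisfies $\phi$. The only organisational difference is that you induct on $|Y|$ from the outside (single-literal step plus iteration), whereas the paper fixes the full $Y$ at once, defines the extended assignment $\tau\upharpoonright Y$ directly, and appeals to an ``easy induction'' only for the equivalence $\tau\upharpoonright Y\models\phi \Leftrightarrow \tau\models\reduct{\phi}{Y}$. Your decomposition is slightly more modular and makes the bookkeeping you flag (extending partial assignments over missing variables) explicit at each step; the paper's version is terser because it absorbs that bookkeeping into a single extension by all of $Y$ simultaneously. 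Neither approach buys anything the other does not---they are two presentations of the same argument.
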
\iflong
\begin{proof}
Let $\psi,\varphi$, and $X$ be as in the formulation of the lemma and assume that $\psi\models\varphi$ holds. Now fix an arbitrary $Y\in \A(X)$ and consider every assignment $\tau_Y\colon\Vars{\reduct{\psi}{Y}}\cup\Vars{\reduct{\varphi}{Y}}\to\{0,1\}$. Note that $\tau_Y$ is defined on $(\Vars{\psi}\cup\Vars{\varphi})\setminus Y$. 
Define $\tau\upharpoonright Y$ as the assignment $\tau$ extended by setting $\tau(x)\dfn 1$ if $x\in Y$, and $\tau(x)\dfn 0$ if $\lnot x\in Y$. Thus $\tau\upharpoonright Y$ completely agrees with $\tau$ on the variables in $Y$.

Then $\tau\upharpoonright Y \models\lnot \psi\lor\varphi$ holds by assumption as $\psi\models\varphi$. Then by an easy induction we get $\tau\upharpoonright Y\models\varphi$ if and only if $\tau\models\reduct{\varphi}{Y}$, and $\tau\upharpoonright Y\models \psi$ if and only if $\tau\models\reduct{\psi}{Y}$. Thus we get 
$$
\tau\models \reduct{\psi}{Y} \Longleftrightarrow \tau\upharpoonright Y \models \psi \Longrightarrow \tau\upharpoonright Y\models \varphi\Longleftrightarrow\tau\models\reduct{\varphi}{Y}
$$
and the lemma follows.	\shortversion{\qed}
\end{proof}\fi

We denote by $\BD\IMP(\CNF\to\form)$ the parameterised version of the
problem~$\IMP(\CNF)$ where additionally a strong $\form$\hy backdoor
is given and the parameter is the size of the strong $\form$-backdoor.

\begin{corollary}\label{cor:imp-cnf-fpt}
 Given a class~$\form\in\{\id,\horn,\krom\}$ of CNF formulas. Then $\BD\IMP(\CNF\to\form) \in \FPT$.
\end{corollary}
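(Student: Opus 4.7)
The plan is to branch over all $2^{|B|}$ truth assignments to the backdoor $B$. Since $B$ is a strong $\form$-backdoor, every resulting reduct belongs to $\form$, and implication in $\form$ can then be decided in polynomial time by Lemma~\ref{lem:KromImpInP} ($\form = \krom$), by Proposition~\ref{prop:imp-horn-p} ($\form = \horn$), or by a trivial subset-check on positive unit clauses ($\form = \id$).

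The central step is to establish the equivalence
\[
  \Phi \models \psi \quad\Longleftrightarrow\quad \Phi[\theta] \models \psi[\theta] \text{ for every } \theta \in \A(B),
\]
where $\Phi[\theta] := \{\,\varphi[\theta] \mid \varphi \in \Phi\,\}$. The "$\Rightarrow$" direction is an immediate consequence of Lemma~\ref{lem:imp_inv_usual_reducts} applied with $X = B$ to the implication $\bigwedge\Phi \models \psi$. For the "$\Leftarrow$" direction, I would argue contrapositively: any total assignment $\tau$ witnessing $\Phi \not\models \psi$ (that is, $\tau \models \bigwedge\Phi$ but $\tau \not\models \psi$) splits into its restriction $\theta \in \A(B)$ on the backdoor variables and an assignment $\tau'$ on the remaining variables. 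By the definition of the reduct, $\tau' \models \Phi[\theta]$ while $\tau' \not\models \psi[\theta]$, so the hypothesis fails for this $\theta$.

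Given the equivalence, the algorithm is straightforward: enumerate the $2^{|B|}$ assignments $\theta \in \A(B)$, compute both reducts in polynomial time, and invoke the corresponding polynomial-time procedure for $\IMP(\form)$; return \textsc{yes} iff every branch returns \textsc{yes}. The total running time is $2^{|B|} \cdot (|\Phi| + |\psi|)^{O(1)}$, which is fpt-time with parameter $|B|$ and hence places the problem in $\FPT$.

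The only delicate point is the "$\Leftarrow$" direction of the equivalence, since one must check that a counterexample assignment at the original level projects faithfully to a counterexample of a single reduced instance; however, this is a direct unfolding of the reduct definition and does not require tautology-freeness assumptions as in Lemma~\ref{lem:imp_invariant_deletion}, because here we reduce by genuine truth values rather than by the deletion symbol $x_\varepsilon$.
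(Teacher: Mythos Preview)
Your proposal is correct and follows essentially the same approach as the paper: branch over all $2^{|B|}$ assignments of the backdoor, apply the reduct, and solve the resulting $\form$-implication instances in polynomial time (the paper handles $\id$ as a special case of $\horn$ rather than via a direct subset check). You are in fact slightly more careful than the paper, which cites only Lemma~\ref{lem:imp_inv_usual_reducts} for correctness and thus leaves the $\Leftarrow$ direction of the equivalence implicit, whereas you spell it out.
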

\iflong
\begin{proof}
Let $W,\varphi,X$ be the given input instance. Then the following $\FPT$ algorithm decides the problem $\BD\IMP(\CNF\to\form)$. For every assignment $Y\in\A(X)$ check if $\reduct{W}{Y}\models\reduct{\varphi}{Y}$. For the corresponding classes $\form$ these implication problems are all decidable in polynomial time; for $\krom$ see Lemma~\ref{lem:KromImpInP}, for $\horn$ see Proposition~\ref{prop:imp-horn-p}, and $\id$ is a special case of $\horn$. The correctness follows from Lemma~\ref{lem:imp_inv_usual_reducts}. Hence the corollary applies.\shortversion{\qed}
\end{proof}\fi

A combination of Lemma~\ref{lem:imp_invariant_deletion} and Lemma~\ref{lem:imp_inv_usual_reducts} yields a generalisation for CNF formulas that do not contain tautological clauses. Note that the crucial difference is the use of $\tA$ instead of $\A$ in the claim of the result.

\begin{corollary}\label{cor:reducts-maintain-imp}
	Let $\psi,\varphi$ be two $\CNF$ formulas that do not contain tautological clauses, and $X\subseteq\Vars{E}\cup\Vars\varphi$ be a set of variables. If $\psi\models\varphi$ then for every set $Y\in\tA(X)$ it holds $\reduct{\psi}{Y}\models\reduct{\varphi}{Y}$.
\end{corollary}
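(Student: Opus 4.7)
The plan is to split $Y$ into its literal part and its ``don't care'' part, and then apply the two preceding lemmas in sequence. Write $Y = Y_\ell \cup Y_\varepsilon$, where $Y_\ell \dfn \{\,a \in Y \mid a \text{ is a literal}\,\}$ and $Y_\varepsilon \dfn \{\,x_\varepsilon \in Y\,\}$. Since the paper has already observed that the order in which the single-element reducts are applied is immaterial, I may compute $\reduct{\psi}{Y}$ as $\reduct{(\reduct{\psi}{Y_\ell})}{Y_\varepsilon}$, and similarly for $\varphi$.

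First I would identify $Y_\ell$ with an element of $\A(Y_\ell)$ and apply Lemma~\ref{lem:imp_inv_usual_reducts} to the assumption $\psi \models \varphi$, obtaining
\[
\reduct{\psi}{Y_\ell} \;\models\; \reduct{\varphi}{Y_\ell}.
\]
Next I need to iterate Lemma~\ref{lem:imp_invariant_deletion} along $Y_\varepsilon$. To do so, I must verify that its hypothesis (the absence of tautological clauses) is maintained throughout. This is immediate: the literal reduct $\reduct{\cdot}{\ell}$ either deletes a clause entirely or removes the single literal $\sim\!\!\ell$ from it, and the deletion reduct $\reduct{\cdot}{x_\varepsilon}$ removes both $x$ and $\lnot x$ from every clause. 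Neither operation can turn a non-tautological clause into a tautological one. Hence $\reduct{\psi}{Y_\ell}$ and $\reduct{\varphi}{Y_\ell}$ are still tautology-free, and after each successive application of a deletion reduct the resulting formulas remain tautology-free.

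Enumerating $Y_\varepsilon = \{x^{(1)}_\varepsilon, \dots, x^{(m)}_\varepsilon\}$, a straightforward induction on $m$ using Lemma~\ref{lem:imp_invariant_deletion} at each step yields
\[
\reduct{\big(\reduct{\psi}{Y_\ell}\big)}{Y_\varepsilon} \;\models\; \reduct{\big(\reduct{\varphi}{Y_\ell}\big)}{Y_\varepsilon},
\]
which, by the order-independence of the reducts, is exactly $\reduct{\psi}{Y} \models \reduct{\varphi}{Y}$. The only real subtlety is the preservation of the tautology-free hypothesis between the iterated applications of Lemma~\ref{lem:imp_invariant_deletion}; once that is observed, the rest is bookkeeping.
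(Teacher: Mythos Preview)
Your proposal is correct and is precisely the argument the paper has in mind: the corollary is stated without a detailed proof, merely as ``a combination of Lemma~\ref{lem:imp_invariant_deletion} and Lemma~\ref{lem:imp_inv_usual_reducts}'', and your decomposition of $Y$ into its literal and $\varepsilon$\hy parts followed by applying the two lemmas in sequence is exactly that combination. The only cosmetic quibble is that when you write ``identify $Y_\ell$ with an element of $\A(Y_\ell)$'' you mean $\A$ of the underlying variable set of $Y_\ell$, but this is harmless.
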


The following lemma is an important cornerstone for the upcoming
section. It intuitively states that we do not loose any stable extensions under the application of reducts.
Before we can start with the lemma we need to introduce a
bit of notion. For a set~$D=\{\delta_1,\dots,\delta_n\}$ of default
rules and a set~$E$ of formulas we define
$ \conclusions{D}{E}\dfn\{\concl(\delta_i)\mid 1\leq i\leq n,
\delta_i\in D, E\models y_i\}, $ that is, the set of conclusions of
default rules~$\delta_i$ such that $y_i$ is implied by all
formulas in $E$. Further, for a set~$X$ of variables, we will extend the notion for
$\SE{\cdot}$ as follows:
\[
\SE{\encoding{W,D},X}\dfn \bigcup_{Y\in \tA(X)}\{\Th{W\cup \conclusions{D}{E}}\mid E\in\SE{\reduct{W,D}{Y}}\}.
\]

\begin{lemma}\label{lem:reducts}
	Let $\encoding{W,D}$ be a $\CNF$ default theory with formulas that do not contain tautological clauses, and $X$ be a set of variables from $\Vars{W,D}$. Then $\SE{\encoding{W,D}}\subseteq\SE{\encoding{W,D},X}.$
\end{lemma}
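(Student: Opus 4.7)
The plan is to show, given any $E \in \SE{\encoding{W,D}}$, how to manufacture a witness $Y \in \tA(X)$ together with a consistent stable extension $E'$ of $\reduct{W,D}{Y}$ such that $\Th{W\cup\conclusions{D}{E'}}=E$. For the choice of $Y$, I read it off the extension itself: for each $x \in X$ put $x \in Y$ if $E \models x$, put $\lnot x \in Y$ if $E \models \lnot x$, and put $x_\varepsilon \in Y$ otherwise. This is well-defined because $E$ is consistent. Using Proposition~\ref{prop:stageConstruction}, let $G = \{\delta_i = \frac{\alpha_i:\beta_i}{\gamma_i} \in D \mid E \models \alpha_i,\ \lnot \beta_i \notin E\}$ be the set of generating defaults, so that $E = \Th{W \cup \{\gamma_i \mid \delta_i \in G\}}$.

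The candidate is then $E' \dfn \Th{\reduct{W}{Y} \cup \{\reduct{\gamma_i}{Y} \land y_i \mid \delta_i \in G\}}$. The payoff of attaching a fresh variable $y_i$ to each conclusion in the reduct is that, by Proposition~\ref{prop:fresh-y-is-good}, $E' \models y_i$ holds exactly when $\delta_i \in G$; hence $\conclusions{D}{E'} = \{\concl(\delta) \mid \delta \in G\}$ directly from the definition, and once $E'$ is shown to be a stable extension of $\reduct{W,D}{Y}$, the identity $\Th{W \cup \conclusions{D}{E'}} = E$ is immediate.

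The bulk of the work is checking that $E'$ really is a consistent stable extension of $\reduct{W,D}{Y}$. I would apply the stage construction of Proposition~\ref{prop:stageConstruction} to $\reduct{W,D}{Y}$ starting from $\reduct{W}{Y}$, running it in parallel with the stage construction of $E$ starting from $W$, and show by induction on the stage index that the set of generating defaults of $E'$ in $\reduct{W,D}{Y}$ is exactly $\{\delta_i' \mid \delta_i \in G\}$, where $\delta_i' = \frac{\reduct{\alpha_i}{Y}:\reduct{\beta_i}{Y}}{\reduct{\gamma_i}{Y} \land y_i}$. The prerequisite condition that $\reduct{\alpha_i}{Y}$ is derivable at some reduced stage mirrors the derivability of $\alpha_i$ at the corresponding original stage by Corollary~\ref{cor:reducts-maintain-imp}; the justification condition $\lnot \reduct{\beta_i}{Y} \notin E'$ mirrors $\lnot \beta_i \notin E$ by the same corollary, with the fresh $y_j$'s unable to interfere by Proposition~\ref{prop:fresh-y-is-good}. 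Consistency of $E'$ then transfers from that of $E$ via the same freshness argument: any derivation of $\bot$ from $\reduct{W}{Y} \cup \{\reduct{\gamma_i}{Y} \land y_i \mid \delta_i \in G\}$ would, after discarding the $y_i$'s and reinstating the literals prescribed by $Y$, yield a derivation of $\bot$ from $W \cup \{\gamma_i \mid \delta_i \in G\}$.

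The main obstacle I foresee is the reverse direction of the justification correspondence: ruling out that $\reduct{\beta_i}{Y}$ becomes \emph{newly} contradicted in $E'$ even though $\beta_i$ was not contradicted in $E$. This is precisely where the hypothesis that $W$ and $D$ contain no tautological clauses is needed, because it is the premise of Lemma~\ref{lem:imp_invariant_deletion}, which controls the $x_\varepsilon$-components of $Y$ (the $\A$-components being handled by Lemma~\ref{lem:imp_inv_usual_reducts} alone). Once both directions of the prerequisite and justification correspondences are in hand, the induction closes and Proposition~\ref{prop:stageConstruction} certifies $E' \in \SE{\reduct{W,D}{Y}}$, completing the inclusion.
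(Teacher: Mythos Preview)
Your proposal is correct and follows essentially the same approach as the paper: the same choice of $Y$ (read off the literals decided by $E$, with $x_\varepsilon$ for the undecided ones), the same candidate $E'$ built from the reduced knowledge base together with the reduced conclusions of the generating defaults $G$, and the same appeal to Corollary~\ref{cor:reducts-maintain-imp} to propagate the prerequisite and justification conditions through the reduct. The paper wraps this as a proof by contradiction, but the contradiction is simply the derivation of $E \in \SE{\encoding{W,D},X}$, so the two arguments coincide; if anything, your version is slightly more explicit in tracking the $y_i$'s inside $E'$, in invoking Proposition~\ref{prop:fresh-y-is-good} to justify $\conclusions{D}{E'}=\{\concl(\delta)\mid\delta\in G\}$, and in isolating the one nontrivial direction (that no justification becomes \emph{newly} violated after the reduct) as the place where the no-tautological-clauses hypothesis is used via Lemma~\ref{lem:imp_invariant_deletion}.
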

\begin{proof}
	Let $\encoding{W,D}$ be the given default theory, $X\subseteq\Vars{W,D}$, and $E\in\SE{\encoding{W,D}}$ be a  consistent stable extension of $\encoding{W,D}$.

	Now suppose for contradiction that $E\notin\SE{\encoding{W,D},X}$. Further, let $G$ be the set of generating defaults of $E$ by Proposition~\ref{prop:stageConstruction}, and w.l.o.g.~let $G\dfn\{\delta_1,\dots,\delta_k\}$ also denote the order in which these defaults are applied. Thus it holds that $E=\Th{W\cup\{\concl(\delta)\mid\delta\in G\}}$. 
	Hence, $W\models\prereq(\delta_1)$ holds and further fix a $Y\in \tA(X)$ which agrees with $E$ on the implied literals from $\Vars{W,D}$, i.e., $x\in Y$ if $E\models x$ for $x\in\Vars{W,D}$, $\lnot x\in Y$ if $\models\lnot x$, and $x_\varepsilon\in Y$ otherwise. Then, by Corollary~\ref{cor:reducts-maintain-imp} we know that also $\bigwedge_{\omega\in W}\reduct{\omega}{Y}\models\reduct{\prereq(\delta_1)}{Y}$ is true. Furthermore, we get that
	$$
	\bigwedge_{\omega\in W}\reduct{\omega}{Y}\land\bigwedge_{1\leq j\leq i}\reduct{\concl(\delta_j)}{Y}\models \reduct{\prereq(\delta_{i+1})}{Y}
	$$
	holds for $i<k$. Thus, by definition of $\reduct{W,D}{Y}$, the reducts of the knowledge base $W$ and the derived conclusions together trivially imply the $y_i$s, i.e., it holds that
	$$
	\bigwedge_{\omega\in W}\reduct{\omega}{Y}\land\bigwedge_{1\leq i\leq k}\reduct{\concl(\delta_i)}{Y}\models \bigwedge_{1\leq i\leq k}y_i.
	$$
	
	As neither $E\models\prereq(\delta)$ holds for some $\delta\in D\setminus G$, nor $E\cup\{\concl(\delta)\mid \delta\in G\}\models \delta'$ is true for some $\delta'\in D\setminus G$, $E$ is a consistent set, and $Y$ agrees with $E$ on the implied variables from $\Vars{W,D}$, we get that no further default rule $\delta$ is triggered by $\reduct{W}{Y}$ or $\reduct{W\cup\{\concl(\delta)\mid\delta\in D\setminus G\}}{Y}$. 
	
	Further, it holds that no justification is violated as $E\models\lnot\beta$ for some $\beta\in\bigcup_{\delta\in G}\just(\delta)$ would imply that $\reduct{E}{Y}\models\lnot\reduct{\beta}{Y}$ also holds by Corollary~\ref{cor:reducts-maintain-imp}. 
	Thus, eventually $E'=\Th{\reduct{W}{Y}\cup \{\reduct{\concl(\delta)}{Y}\mid \delta\in G\}}$ is a stable extension with respect to $\reduct{W,D}{Y}$. But, 
 the set of conclusions of $G$ coincides with $\conclusions{D}{E'}$ wherefore 
	\begin{align*}
		E&=\Th{W\cup \{\concl(\delta)\mid\delta\in G\}}\\
		&=\Th{W\cup \conclusions{D}{E'}}\in\SE{\encoding{W,D},X}
	\end{align*}
	holds, which contradicts our assumption. Thus, the lemma applies.\shortversion{\qed}
\end{proof}

We have seen that it is important to disallow tautological clauses. However, the detection of this kind of clauses is possible in polynomial time. Therefore, we assume in the following that a given theory contains no tautological clauses.
This is not a very weak restriction  as (i)~$\phi\land C\equiv\phi$ for any tautological clause~$C$, and (ii)~$C\equiv\top$ for any tautological clause~$C$.

The following example illustrates how reducts maintain existence of stable extensions.


\begin{example}\label{ex:reducts}
	The default theory $\encoding{W,D}=\{\{x\},\{\frac{x:y}{\lnot y\lor x}\}\}$ has the extension $E\dfn\Th{x,\lnot y\lor x}$ and yields the following cases for the backdoor $B=\{x\}$:
$\reduct{W,D}{x} =\langle\{\top\}, \{\frac{\top:z}{y_1}\}\rangle$, yielding $\SE{\reduct{W,D}{x}}=\{\Th{y_1}\},$
and, both, $\reduct{W,D}{\neg x}$ and $\reduct{W,D}{x_{\varepsilon}}$ yield an empty set of stable extensions. Thus, with $\conclusions{D}{\Th{y_1}}=\{\lnot y\lor x\}$ we get $\Th{\{\lnot y\lor x\}\cup\{x\}}$ which is equivalent to the extension $E$ of $\encoding{W,D}$.
\end{example}

Now, we are in the position to present a definition of strong
backdoors for default logic.

\begin{definition}[Strong Backdoors for Default Logic]
	Given a $\CNF$ default theory $\encoding{W,D}$, a set~$B\subseteq\Vars{W,D}$ of variables, and a class~$\form$ of formulas. We say that $B$ is a \emph{strong $\form$\hy backdoor} if for each $Y\in \tA(B)$ the reduct $\reduct{W,D}{Y}$ is a $\form$ default theory.
\end{definition}

\section{Backdoor Evaluation}
In this section, we  investigate the evaluation of strong backdoors for the extension existence problem in default logic with respect to different classes of CNF formulas. Formally, the problem of strong backdoor evaluation for extension existence is defined as follows.
\pproblem
{$\SBDexpl(\form\to\form')$}
{An $\form$\hy default theory~$\langle W,D \rangle$ and \iflong\else\\\fi
a strong $\form'$\hy backdoor~$B\subseteq \vars(W) \cup \vars(D)$.}
{The size of the backdoor~$B$.}%
{Does $\langle W,D \rangle$ have a stable extension? }

First, we study the complexity of the ``extension checking problem'', which is a main task we need to accomplish when using backdoors as our approach following Lemma~\ref{lem:reducts} yields only ``stable extension candidates''.
Formally, given a default theory~$\encoding{W,D}$ and a finite set~$\Phi$ of formulas, $\checkExt$ asks whether $\Th{\Phi}\in\SE{\encoding{W,D}}$ holds.

Rosati \cite{rosati} classified the extension checking problem as complete for the complexity class $\Theta^\P_2=\DeltaP2[\log]$, which allows only logarithmic many oracle questions to an $\NP$ oracle. \iflong For further information on the complexity class $\Theta^\P_2$ we refer the reader to the survey article of Eiter and Gottlob \cite{eg97}.\fi  We will later see that a simpler version suffices for our complexity analysis. Therefore, we state in Algorithm~\ref{alg:extchecking} an adaption of Rosatis algorithm \cite[Figure 1]{rosati} to our notation showing containment (only) in $\DeltaP2$. 

\begin{proposition}[{\cite[Figure 1, Theorem 4]{rosati}}]\label{prop:ext-checking-deltap2}
$\checkExt\in\DeltaP2$.
\end{proposition}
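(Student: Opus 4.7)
The plan is to turn the stage construction characterisation of Proposition~\ref{prop:stageConstruction} into a deterministic polynomial-time procedure that makes only polynomially many calls to an $\NP$ oracle. Given $\encoding{W,D}$ and $\Phi$, set the candidate extension to $E:=\Th{\Phi}$. Observe that every elementary test that will be required, namely ``does $\Psi\models\chi$ hold for two $\CNF$ formulas?'', is a $\coNP$ question and can therefore be answered by a single query to an $\NP$ oracle (by asking whether $\Psi\land\lnot\chi$ is satisfiable). Since the whole algorithm will pose only polynomially many such queries, membership in $\DeltaP2 = \P^\NP$ will follow.

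The first step is to filter the default rules by their justifications relative to the candidate~$E$: for each $\delta=\frac{\alpha:\beta}{\gamma}\in D$ I would ask the oracle whether $\Phi\cup\{\beta\}$ is unsatisfiable, equivalently whether $\Th{\Phi}\models\lnot\beta$, and collect those $\delta$ for which this fails into a set $G_\Phi$ of justification-admissible defaults. This costs $|D|$ oracle queries. The second step simulates the stage construction restricted to $G_\Phi$: set $F_0 := W$ and, at round $i$, ask for every $\delta\in G_\Phi$ not yet applied whether $F_i\models\prereq(\delta)$; add the conclusions of all triggered defaults to form $F_{i+1}$. Since each round that changes the current set must apply at least one new default from $G_\Phi$, the sequence stabilises within $|D|$ rounds at some $F^*$. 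The third step verifies the candidate by checking both $F^*\models\bigwedge\Phi$ and $\bigwedge_{\omega\in W}\omega \land \bigwedge_{\delta\in G_\Phi\text{ applied}}\concl(\delta)$ is implied by $\Th{\Phi}$, which amounts to two further oracle queries. Accept iff both tests succeed.

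For correctness I would invoke Proposition~\ref{prop:stageConstruction} in both directions. If $\Th{\Phi}$ is a stable extension then $G_\Phi$ is precisely its set of generating defaults, the stages $F_0,F_1,\dots$ coincide with Reiter's $E_0,E_1,\dots$, and $F^*\equiv\Th{\Phi}$, so the algorithm accepts. Conversely, if the algorithm accepts, then the same identification forces $\Th{\Phi}=\bigcup_i F_i$ to satisfy all three clauses of the fixed-point definition, and hence to be a stable extension of $\encoding{W,D}$.

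The delicate point, and the main thing to argue carefully, is the interleaving of the two conditions governing the stage construction: the justification filter depends on the entire candidate $\Th{\Phi}$, whereas applicability of a default at stage $i$ depends only on $F_i$. Fixing the candidate up front before running the stage construction resolves this cleanly, but one has to verify that no default is wrongly retained (which is immediate, since justification admissibility is a necessary condition for being generating) and no default is wrongly discarded (which follows because if $\delta$ is generating for $\Th{\Phi}$ then by definition $\Th{\Phi}\not\models\lnot\just(\delta)$, so the filter keeps it). Once this is settled, the query count $O(|D|^2)$ is immediate and the claimed $\DeltaP2$ upper bound follows.
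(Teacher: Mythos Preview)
Your proposal is correct and follows essentially the same approach as the paper. The paper presents exactly this three-phase procedure as Algorithm~\ref{alg:extchecking} (adapted from Rosati): (1)~filter out defaults whose justification is refuted by the candidate, (2)~run the justification-free stage construction starting from~$W$, and (3)~test mutual implication between the result and the candidate; your correctness argument and $O(|D|^2)$ oracle-query count are spelled out in more detail than the paper gives, but the algorithmic content is identical.
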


In a way, extension checking can be compared to model checking in logic. In default logic the complexity of the extension existence problem $\Ext$ is twofold: using the approach of Proposition~\ref{prop:stageConstruction} (i) one has to non-deterministically guess the set (and ordering) of the generating defaults, and (ii) one has to verify whether the generating defaults lead to an extension. For (ii), one needs to answer quadratic many implication questions. Hence, the problem is in $\NP^\NP$. Thus, a straightforward approach for $\checkExt$ omits the non-determinism in (i) and achieves the result in $\P^\NP$.

\begin{algorithm}[t]
  \LinesNumbered
  \SetAlgoNoEnd
  \SetAlgoNoLine
  \DontPrintSemicolon
  \KwIn{Set~$E$ of formulas and a default theory $\encoding{W,D}$}
  \KwOut{True iff $E$ is a stable extension of $\encoding{W,D}$}
 $D'\eqdef \emptyset$\;
 \ForAll(\texttt{// (1) Classify unviolated justifications.}){$\frac{\alpha:\beta}{\gamma} \in D$}{
 		\lIf{$E\not\models \lnot \beta$}{
 			$D'\eqdef D' \cup \{\frac{\alpha:}{\gamma}\}$
 		}
 }
 \tcp{(2) Compute extension candidate of justification-free theory.}
 $E'\eqdef W$ \;
 \While{$E'$ did change in the last iteration}{
	\ForAll{$\frac{\alpha:}{\gamma}\in D'$}{
		\lIf{$E'\models\alpha$}{
			$E'\eqdef E'\land\gamma$
		}
    }
 }
\tcp{(3) Does the candidate match the extension?}
\leIf{$E\models E'$ \textbf{\emph{and}} $E'\models E$}{\Return true}{\Return false}
\caption{Extension checking algorithm \cite[Theorem 4]{rosati}}\label{alg:extchecking}
\end{algorithm}

\begin{theorem}\label{thm:eval-cnf-to-horn}
  $\SBDexpl(\CNF\to\horn) \in \paraNP$.
\end{theorem}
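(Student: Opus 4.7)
The plan is to give a nondeterministic algorithm running in time $f(|B|)\cdot n^{O(1)}$, which places the problem in $\paraNP$. Three ingredients combine: (i) fpt enumeration of the $3^{|B|}$ threefold assignments $Y\in\tA(B)$; (ii) the membership $\Ext(\horn)\in\NP$ from Proposition~\ref{prop:EXT-fragments}; and (iii) the fact that every implication or satisfiability query needed to verify a candidate extension can be split, via the strong Horn-backdoor, into $2^{|B|}$ polynomial-time Horn queries (Proposition~\ref{prop:imp-horn-p}).

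First I would record a short backdoor decomposition observation: for any CNF formulas $\psi,\varphi$ whose reducts by each $Y'\in\A(B)$ are Horn---which is automatic for any formula built from $W$ and from default components of $\encoding{W,D}$, since $B$ is a strong Horn-backdoor---one has $\psi\models\varphi$ iff $\reduct{\psi}{Y'}\models\reduct{\varphi}{Y'}$ for every $Y'\in\A(B)$, and $\psi\land\varphi$ is satisfiable iff $\reduct{\psi}{Y'}\land\reduct{\varphi}{Y'}$ is satisfiable for some $Y'\in\A(B)$. The forward direction of the implication equivalence is Lemma~\ref{lem:imp_inv_usual_reducts}; for the converse, a model $\sigma$ of $\psi$ restricts on $B$ to some $Y'=\sigma|_B$, satisfies $\reduct{\psi}{Y'}$ on the remaining variables, hence by hypothesis satisfies $\reduct{\varphi}{Y'}$, and extends back to a model of $\varphi$. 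Enumerating the $2^{|B|}$ assignments in $\A(B)$ and solving each Horn instance in polynomial time thus resolves both queries in fpt-time.

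With this in hand, the algorithm proceeds as follows: (1) guess $Y\in\tA(B)$; (2) invoke the $\NP$ procedure for $\Ext(\horn)$ on the Horn reduct $\reduct{W,D}{Y}$, producing---together with a polynomial nondeterministic witness---a set $G\subseteq D$ of defaults generating some $E'\in\SE{\reduct{W,D}{Y}}$ (identified through the fresh propositions $y_i$ as $G=\{\delta_i : E'\models y_i\}$); (3) form the lifted candidate $E := W\cup\{\concl(\delta)\mid\delta\in G\}$; (4) verify that $E$ is a consistent stable extension of $\encoding{W,D}$ by running Algorithm~\ref{alg:extchecking}, replacing each of its implication and non-implication queries by the fpt Horn-reduct procedure above. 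Soundness is immediate since acceptance requires the verifier to certify $E$ as a stable extension of $\encoding{W,D}$; completeness is exactly Lemma~\ref{lem:reducts}, which guarantees that every consistent stable extension of $\encoding{W,D}$ arises as such a lift for an appropriate pair $(Y,E')$.

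The main obstacle is that the unparameterised extension-checking problem is $\DeltaP2$-complete by Proposition~\ref{prop:ext-checking-deltap2}, so invoking it as a black box would yield only $\para\DeltaP2$. The decisive point is to open up Algorithm~\ref{alg:extchecking} and observe that each of its coNP-flavoured queries on the input CNF theory collapses, via the backdoor, to a fixed-parameter bundle of polynomial-time Horn queries. What then remains of the nondeterminism is exactly the single NP guess for the Horn stable extension, placing the problem in $\paraNP$.
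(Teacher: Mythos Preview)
Your proposal is correct and follows essentially the same approach as the paper: guess a threefold assignment $Y\in\tA(B)$, nondeterministically obtain generating defaults for a stable extension of the Horn reduct $\reduct{W,D}{Y}$, lift via the $y_i$'s to a candidate for $\encoding{W,D}$, and run the extension-checking Algorithm~\ref{alg:extchecking} with each implication query discharged in fpt-time through the backdoor (your decomposition observation is exactly the content of Corollary~\ref{cor:imp-cnf-fpt}, and completeness via Lemma~\ref{lem:reducts} is what the paper relies on as well). Your write-up is, if anything, slightly more explicit than the paper's about the soundness/completeness split and about handling the $E\not\models\lnot\beta$ queries as satisfiability rather than implication.
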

\begin{proof}
Let $\encoding{W,D}$ be a given $\CNF$ default theory and $B\subseteq\Vars{W,D}$ be the given backdoor. In order to evaluate the backdoor we have to consider the $|\tA(X)|=3^{|B|}$ many different reducts to $\horn$ default theories. For each of them we have to non-deterministically guess a set of generating defaults $G$. Then, we use Algorithm~\ref{alg:extchecking} to verify whether $W\land\bigwedge_{g\in G}g$ is a stable extension (extensions can be represented by generating defaults; see Proposition~\ref{prop:stageConstruction}). $\IMP(\horn)\in\P$ by  Proposition~\ref{prop:imp-horn-p}. Hence, stable extension checking is in $\P$ for $\horn$ formulas. Then, after finding an extension~$E$ with respect to the reduct default theory $\reduct{W,D}{Y}$, we need to compute the corresponding extension $E'$ with respect to the original default theory. Here we just need to verify simple implication questions of the form $E\models y_i$ for $1\leq i\leq |D|$. Next, we need to verify whether $E'$ is a valid extension for $\encoding{W,D}$ using Algorithm~\ref{alg:extchecking}. Note that Corollary~\ref{cor:imp-cnf-fpt} shows that the implication problem of propositional formulas parameterised by the size of the backdoor is in $\FPT$, hence we can compute the implication questions inline. As the length of the used formulas is bounded by the input size and the relevant parameter is the same as for the input this runs in fpt-time.
\iflong

\else\fi
Together this yields a $\paraNP$ algorithm. \iflong Algorithm~\ref{alg:generic-eval-algo} depicts a generic algorithm in pseudocode.\fi\shortversion{\qed}
\end{proof}

\iflong\begin{algorithm}[t]
  \LinesNumbered
  \SetAlgoNoEnd
  \SetAlgoNoLine
  \DontPrintSemicolon
  \KwIn{$\form$-default theory $\encoding{W,D}$, backdoor $B\subseteq\Vars{W,D}$}
	\For{$Y\in\tA(X)$}{
		construct set of generating defaults $G$ for $\form'$ default theory $\reduct{W,D}{Y}$\;
		\If{$E\dfn\bigwedge_{w\in\reduct{W}{Y}}w\land\bigwedge_{\frac{\alpha:\beta}{\gamma}\in G}\gamma$ is extension for $\reduct{W,D}{Y}$}{
			$E'\dfn \bigwedge_{\omega\in W}\omega\land \bigwedge_{c\in\conclusions{D}{E'}}c$ \tcp{always in P by construction}
			\lIf{$E'$ is extension for $\encoding{W,D}$}{
				\Return true
			}
		}
	}
	\Return false
	\caption{Generic algorithm for $\SBDexpl(\form\to\form')$}\label{alg:generic-eval-algo}
\end{algorithm}\fi

\iflong
\begin{corollary}\label{cor:eval-cnf-to-krom}
$\SBDexpl(\CNF\to\krom)\in\paraNP$.
\end{corollary}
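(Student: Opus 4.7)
The plan is to essentially mimic the proof of Theorem~\ref{thm:eval-cnf-to-horn} line by line, only swapping out the tractability witnesses for $\horn$ with their $\krom$ counterparts. The generic algorithm (Algorithm~\ref{alg:generic-eval-algo}) is oblivious to the particular target class: it iterates over the $3^{|B|}$ reducts, each of which is a $\krom$ default theory, guesses a set of generating defaults, computes the candidate extension, lifts it back to the original $\CNF$ default theory via the $y_i$ markers, and finally certifies it with Algorithm~\ref{alg:extchecking}.

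First I would point out that the enumeration of $Y\in\tA(B)$ contributes only an $f(|B|)$ factor, and that by Lemma~\ref{lem:reducts} it suffices to examine these reducts in order to recover every consistent stable extension of $\encoding{W,D}$. For each $Y$, the reduct $\reduct{W,D}{Y}$ is a $\krom$-default theory (by definition of the backdoor), and a set of generating defaults can be guessed non-deterministically in polynomial time. To verify that this guess yields a genuine stable extension of the reduct, I would invoke Algorithm~\ref{alg:extchecking}; since all implication checks inside it are over $\krom$ formulas and $\IMP(\krom)\in\P$ by Lemma~\ref{lem:KromImpInP}, the whole extension-checking phase for the reduct runs in polynomial time.

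Next I would lift the candidate extension $E$ from $\reduct{W,D}{Y}$ to the original theory by forming $E'=\Th{W\cup \conclusions{D}{E}}$, just as in the proof for $\horn$; computing $\conclusions{D}{E}$ only requires the simple implications $E\models y_i$ on the fresh auxiliary variables, which are trivial. The last step is to confirm that $E'$ is a stable extension of $\encoding{W,D}$ itself, again using Algorithm~\ref{alg:extchecking}. This inner check now operates on $\CNF$ formulas, but each of its implication queries is an instance of $\BD\IMP(\CNF\to\krom)$ with backdoor $B$, which lies in $\FPT$ by Corollary~\ref{cor:imp-cnf-fpt}; the queries can therefore be answered in fpt-time with respect to $|B|$.

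The only conceivable obstacle is making sure the lifted candidate $E'$ is tested against the original $\CNF$ theory using an $\FPT$ implication oracle rather than a direct polynomial one, but this is precisely what Corollary~\ref{cor:imp-cnf-fpt} delivers. Putting the pieces together, the overall procedure is a non-deterministic algorithm whose running time is bounded by $f(|B|)\cdot|\encoding{W,D}|^{O(1)}$, establishing $\SBDexpl(\CNF\to\krom)\in\paraNP$.
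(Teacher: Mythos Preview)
Your proposal is correct and follows exactly the approach the paper takes: the paper's own proof simply notes that $\IMP(\krom)\in\P$ by Lemma~\ref{lem:KromImpInP} and then appeals to ``a similar argumentation as in the proof of Theorem~\ref{thm:eval-cnf-to-horn}'', which is precisely the line-by-line adaptation you have spelled out. In fact, you have made explicit all the steps (the $3^{|B|}$ enumeration, the guess of generating defaults, the polynomial-time extension check for the $\krom$ reduct, the lift via $\conclusions{D}{E}$, and the final $\FPT$ implication check via Corollary~\ref{cor:imp-cnf-fpt}) that the paper only refers to implicitly.
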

\begin{proof}
The implication problem of $\krom$ formulas is in $\P$ due to Lemma~\ref{lem:KromImpInP}. 
Thus under a similar argumentation as in the proof of Theorem~\ref{thm:eval-cnf-to-horn} we can construct a $\paraNP$ algorithm.\shortversion{\qed}
\end{proof}

\begin{corollary}\label{cor:eval-cnf-to-mon}
$\SBDexpl(\CNF\to\monotone) \in \para\DeltaP2$.
\end{corollary}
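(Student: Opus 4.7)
The plan is to follow the template of Theorem~\ref{thm:eval-cnf-to-horn} and the generic scheme in Algorithm~\ref{alg:generic-eval-algo}: iterate over all $Y \in \tA(B)$, of which there are $3^{|B|}$ many (so fpt in $|B|$), compute a candidate extension from each monotone reduct $\reduct{W,D}{Y}$, lift it to a candidate $E'$ for the original theory via the auxiliary $y_i$-variables as in Lemma~\ref{lem:reducts}, and finally verify $E'$ using Rosati's Algorithm~\ref{alg:extchecking}.

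The key simplification compared with the horn case is that no nondeterministic guess of generating defaults will be needed for the monotone reduct. Indeed, every formula occurring in a monotone default theory is monotone, and any consistent candidate extension $E$ (also monotone) is satisfied by the all-true assignment, which in turn falsifies $\lnot\beta$ for every monotone justification $\beta$. Hence $E \not\models \lnot\beta$ always holds, no justification is ever violated, and the unique stable extension of $\reduct{W,D}{Y}$ can be obtained deterministically in polynomial time by the fixpoint loop used inside Algorithm~\ref{alg:extchecking}, relying on the folklore observation that $\IMP(\monotone) \in \P$ (for monotone $\phi,\psi$ one has $\phi \models \psi$ iff for every clause of $\psi$ some clause of $\phi$ uses only variables from it).

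Having computed the reduct's extension I would lift it to a candidate $E'$ for $\encoding{W,D}$ as in Lemma~\ref{lem:reducts} through the polynomial-time implication checks $E \models y_i$, and then invoke Rosati's Algorithm~\ref{alg:extchecking} on $\encoding{W,D}$ and $E'$; by Proposition~\ref{prop:ext-checking-deltap2} this verification runs in $\DeltaP2$. The overall procedure accepts iff some iteration accepts, and correctness is delivered by Lemma~\ref{lem:reducts}. Altogether we perform $f(|B|) = 3^{|B|}$ outer iterations, each a polynomial reduct pass followed by a single $\DeltaP2$ query on the original theory, which places the problem in $\para\DeltaP2$.

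The main obstacle will be making sure that the presence of the fresh $y_i$-atoms in the reduct's conclusions does not break the monotonicity argument behind ``no justification is ever violated''. Since these atoms are introduced only as additional positive conjuncts, they do not disturb the monotonicity of the candidate extensions, and the remaining ingredients -- the fpt-sized enumeration over $\tA(B)$, Lemma~\ref{lem:reducts}, and Proposition~\ref{prop:ext-checking-deltap2} -- are all already available in the paper.
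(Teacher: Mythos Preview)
Your proof is correct and follows the same high-level template as the paper: iterate over the $3^{|B|}$ reducts, observe that in a monotone default theory the justifications are never violated (so the stage construction is deterministic and yields a unique extension), lift the resulting candidate via the $y_i$'s, and verify it against the original theory. Both arguments land in $\para\DeltaP2$.

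There is one noteworthy difference in the bookkeeping. You argue (correctly) that $\IMP(\monotone)\in\P$ via the clause-containment criterion, so the extension of each monotone reduct is computed in polynomial time, and the entire $\DeltaP2$ cost sits in the final call to Rosati's Algorithm~\ref{alg:extchecking} on the original $\CNF$ theory. The paper, by contrast, only invokes the generic $\co\NP$ upper bound for $\IMP(\monotone)$ and therefore already spends $\DeltaP2$ on computing the reduct's extension; the final verification is then absorbed into the same budget. Your observation is sharper---indeed, combining it with the backdoor-splitting idea behind Corollary~\ref{cor:imp-cnf-fpt} would even let the final $\CNF$ extension check run in fpt-time, suggesting an $\FPT$ bound rather than merely $\para\DeltaP2$---but for the stated corollary both routes are sound. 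One small point to make explicit: your sentence ``$E\not\models\lnot\beta$ always holds'' tacitly assumes $\beta$ is satisfiable; the paper singles out the degenerate cases $\beta\in\{\top,\bot\}$, which you should mention as well (they are handled trivially and do not affect determinism).
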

\begin{proof}
	For a monotone formula $\varphi$ its negation is not any longer monotone unless $\varphi\in\{\top,\bot\}$. This observation is important for such $\varphi$ occurring as justifications. If $\varphi\notin\{\top,\bot\}$ then this justification can be deleted as its negation will not be inferable whence the default rule is applicable whenever its prerequisite is met. If $\varphi\in\{\top,\bot\}$ then either it is only applicable in an inconsistent case or always. Hence we can distinct between these cases in polynomial time. Further observe that because of the previous argumentation there exists a unique stable extension if any. Thus the construction of the set of generating defaults and also the extension is achievable in $\para\DeltaP2$ as we have to do quadratic many implication questions, and the implication problem for $\monotone$ formulas has the same upper bound as the unrestricted one, hence $\co\NP$. Step (5) of Algorithm~\ref{alg:generic-eval-algo} is then just uses Algorithm~\ref{alg:generic-eval-algo} for implication questions which are solved via the standard algorithm (which is possible as we use a $\para\DeltaP2$ algorithm). \shortversion{\qed}
\end{proof}

The following corollary shows that the consideration of backdoor evaluation for the extension existence problem starting from $\krom$ default theories is interesting.
\begin{corollary}\label{cor:ext-krom-np}
	$\EXT(\krom)$ is $\NP$-complete.
\end{corollary}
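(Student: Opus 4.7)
The plan is to establish both $\NP$ membership and $\NP$-hardness of $\EXT(\krom)$.

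For membership, I would guess-and-verify via the stage construction of Proposition~\ref{prop:stageConstruction}. Nondeterministically pick a subset $G\subseteq D$ intended as the set of generating defaults, iteratively compute the closure under firing of those $\delta\in G$ whose prerequisite is already derivable, and then verify that (i)~the resulting candidate $E$ is consistent, (ii)~every $\delta\in G$ has a justification consistent with $E$, and (iii)~every $\delta\in D\setminus G$ is blocked (either its prerequisite is not derivable from $E$ or its justification is contradicted by $E$). Each check reduces to polynomially many $\IMP(\krom)$ and $\SAT(\krom)$ queries, both polynomial-time decidable by Lemma~\ref{lem:KromImpInP} together with the classical 2-SAT algorithm, so the whole verification runs in polynomial time and $\EXT(\krom)\in\NP$.

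For hardness, I would reduce from $3$-$\SAT$. Given $\varphi=\bigwedge_{j=1}^m C_j$ with $C_j=\ell_{j,1}\lor\ell_{j,2}\lor\ell_{j,3}$ over variables $x_1,\dots,x_n$, construct the Krom default theory $\encoding{W,D}$ with $W=\emptyset$ and $D$ containing \emph{choice defaults} $\frac{\top:x_i}{x_i}$ and $\frac{\top:\lnot x_i}{\lnot x_i}$ for each $i$, together with \emph{clause defaults} $\frac{\top:\lnot\ell_{j,1}\land\lnot\ell_{j,2}\land\lnot\ell_{j,3}}{z\land\lnot z}$ for each $j$, where $z$ is a fresh proposition. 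Every prerequisite, justification, and conclusion is a conjunction of unit clauses and therefore Krom. The construction is obviously polynomial.

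The correctness argument then runs as follows. In any consistent stable extension $E$ the two choice defaults for $x_i$ cannot both fire (they would inject both $x_i$ and $\lnot x_i$) and cannot both fail (if neither is committed, both justifications stay consistent and both rules do fire), so exactly one fires and $E$ commits to a total truth assignment $\theta$. Consistency additionally forbids any clause default from firing, which unpacks to $\ell_{j,1}\lor\ell_{j,2}\lor\ell_{j,3}\in E$; since $E$ is committed on every variable of $\varphi$, this disjunction lies in $E$ iff $\theta$ satisfies $C_j$. Conversely, every satisfying assignment of $\varphi$ straightforwardly yields a consistent stable extension generated by the matching choice defaults. The main delicate point will be translating the ``justification not contradicted'' condition into ``clause satisfied by $\theta$'', which hinges on the fact that the choice defaults force $E$ to be committed on every variable of $\varphi$; once this step is pinned down, the biconditional between $\varphi\in\SAT$ and $\encoding{W,D}\in\EXT(\krom)$ drops out directly.
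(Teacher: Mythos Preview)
Your proposal is correct and aligns closely with the paper's argument. For membership, the paper argues in the same spirit: guess a set~$G$ of generating defaults and verify that $W\land\bigwedge_{\delta\in G}\concl(\delta)$ yields a stable extension, appealing to Lemma~\ref{lem:KromImpInP} so that the extension-checking routine (Algorithm~\ref{alg:extchecking}) runs in polynomial time; your direct list of checks~(i)--(iii) is just an unpacked version of that routine.

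For hardness, the paper does not give its own construction but instead observes that the default theory built in \cite[Lemma~5.6]{bmtv12} for the reduction from $\threeSAT$ already uses only $\krom$ formulas. Your explicit reduction via choice defaults $\frac{\top:x_i}{x_i}$, $\frac{\top:\lnot x_i}{\lnot x_i}$ and clause-killing defaults $\frac{\top:\lnot\ell_{j,1}\land\lnot\ell_{j,2}\land\lnot\ell_{j,3}}{z\land\lnot z}$ is a self-contained variant of exactly this idea; all occurring formulas are conjunctions of unit clauses and hence $\krom$ (recall the paper requires the target classes to be clause-induced, so unit clauses are admissible). The correctness argument you sketch is sound, and the ``delicate point'' you flag is handled precisely by the observation that the choice defaults force $E$ to commit to a total assignment on $\Vars{\varphi}$, so $C_j\in E$ if and only if that assignment satisfies $C_j$. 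In short, your lower bound is a concrete instantiation of what the paper obtains by citation, and buys self-containment at the cost of a short extra verification.
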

\begin{proof}
	As Lemma~\ref{lem:KromImpInP} shows that $\IMP(\krom)\in\P$ we get that the extension checking problem for $\krom$ default theories is in $\P$ with the help of Proposition~\ref{prop:ext-checking-deltap2}. In order to show the $\NP$ upper bound, on input~$\encoding{W,D}$ the algorithm just guesses the set of generating defaults $G\subseteq D$ and then verifies if $W\land\bigwedge_{\frac{\alpha:\beta}{\gamma}\in G}\gamma$ is an extension with respect to $\encoding{W,D}$.
	
	For the lower bound observe that the default theory constructed by Beyersdorff et~al.\ \cite[Lemma 5.6]{bmtv12} consists only of $\krom$ formulas settling the lower bound by an reduction from $\threeSAT$. \shortversion{\qed}
\end{proof}

\begin{corollary}
	$\SBDexpl(\CNF\to\id)\in \FPT$.
\end{corollary}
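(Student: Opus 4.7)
The plan is to follow the generic algorithm used in the proof of Theorem~\ref{thm:eval-cnf-to-horn} (see also Algorithm~\ref{alg:generic-eval-algo} in the long version), but to exploit the fact that for $\id$ default theories the extension existence problem is in $\P$ by Proposition~\ref{prop:EXT-fragments}(3). This removes the nondeterministic guess of the generating defaults that was responsible for the $\paraNP$ upper bound in the $\horn$ and $\krom$ cases, so the whole procedure runs deterministically in fpt\hyp{}time.

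Concretely, I would iterate over every $Y \in \tA(B)$, of which there are $3^{|B|}$. For each $Y$, compute the reduct $\reduct{W,D}{Y}$; since $B$ is a strong $\id$\hyp{}backdoor, this reduct is an $\id$ default theory. Using the polynomial\hyp{}time algorithm underlying Proposition~\ref{prop:EXT-fragments}(3), decide whether $\reduct{W,D}{Y}$ has a stable extension $E$ and, if so, construct it explicitly. Then form the candidate $E' = \Th{W \cup \conclusions{D}{E}}$ as suggested by the lifting step in Lemma~\ref{lem:reducts}, and verify with Algorithm~\ref{alg:extchecking} whether $E'$ is actually a stable extension of $\encoding{W,D}$. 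If some iteration succeeds the algorithm returns yes; by Lemma~\ref{lem:reducts} every stable extension of $\encoding{W,D}$ is produced as some such candidate, so if no iteration succeeds then $\encoding{W,D}$ has no stable extension and the algorithm correctly returns no.

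The main obstacle is showing that the verification step with Algorithm~\ref{alg:extchecking} can be executed in fpt\hyp{}time, since it manipulates the original, unrestricted $\CNF$ formulas of $\encoding{W,D}$ rather than the reduced $\id$ formulas. The key observation is that Algorithm~\ref{alg:extchecking} performs only polynomially many implication queries of the form $\Phi \models \psi$, where $\Phi$ and $\psi$ are $\CNF$ formulas assembled from $W$, the conclusions and justifications of the defaults in $D$, and the candidate $E'$. For each such pair, $B$ is still a strong $\id$\hyp{}backdoor, because the target class $\id$ is clause\hyp{}induced and $B$ was a strong $\id$\hyp{}backdoor for all of $\encoding{W,D}$ to begin with. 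Each implication query is therefore an instance of $\BD\IMP(\CNF\to\id)$ and is decidable in fpt\hyp{}time by Corollary~\ref{cor:imp-cnf-fpt}.

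Putting the pieces together, the total running time is bounded by $3^{|B|}\cdot f(|B|)\cdot |\encoding{W,D}|^{O(1)}$ for some computable function $f$ coming from Corollary~\ref{cor:imp-cnf-fpt}, which shows that $\SBDexpl(\CNF\to\id)\in\FPT$.
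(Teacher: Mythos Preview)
Your proposal is correct and follows essentially the same route as the paper: both instantiate the generic scheme of Theorem~\ref{thm:eval-cnf-to-horn}/Algorithm~\ref{alg:generic-eval-algo}, iterate over the $3^{|B|}$ reducts, solve each reduct deterministically because the $\id$ fragment is tractable, lift the result back via Lemma~\ref{lem:reducts}, and verify the candidate with Algorithm~\ref{alg:extchecking} using Corollary~\ref{cor:imp-cnf-fpt} for the implication subqueries. The only cosmetic difference is the justification for why the reduct step is polynomial: you invoke Proposition~\ref{prop:EXT-fragments}(3) ($\Ext(\id)\in\P$) directly, whereas the paper cites that $\IMP$ over conjunctions of positive literals is in $\AC^0$ (Beyersdorff et~al.), which makes Algorithm~\ref{alg:extchecking} polynomial and, combined with the fact that in an $\id$ theory no justification can ever be violated so the generating defaults are determined greedily, yields the same conclusion; your version is arguably the more explicit of the two.
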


\begin{proof}
The implication problem for $\id$ formulas is in $\AC0$ by Beyersdorff et~al.\ who showed this result for formulas using only conjunctions \cite[Theorem 4.1(4)]{bmtv09}. Hence, Algorithm~\ref{alg:extchecking} runs in polynomial time. Thus we achieve the $\FPT$ upper bound by a similar argumentation is the proof of Theorem~\ref{thm:eval-cnf-to-horn}.\shortversion{\qed}
\end{proof}
\else
The following corollary summarises the results for the remaining classes. Detailed proofs can be found in an extended version~\cite{report}.
\begin{corollary}\label{cor:eval-rest}
  ~\\[-1.4em]
  \begin{enumerate}
  \item $\SBDexpl(\CNF\to\monotone) \in \para\DeltaP{2}$,
  \item $\SBDexpl(\CNF\to\krom) \in \para\NP$, and
  \item $\SBDexpl(\CNF\to\id) \in \FPT$.
  \end{enumerate}
\end{corollary}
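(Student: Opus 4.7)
The unified strategy is to instantiate Algorithm~\ref{alg:generic-eval-algo} for each target class~$\form'$: iterate over the $3^{|B|}$ extended assignments $Y\in\tA(B)$, compute the reduct~$\reduct{W,D}{Y}$ which by assumption is an $\form'$-default theory, look for a stable extension~$E$ there, lift it back to a candidate~$E'=\Th{W\cup\conclusions{D}{E}}$ for $\encoding{W,D}$ as licensed by Lemma~\ref{lem:reducts}, and finally verify that $E'$ is really a stable extension of~$\encoding{W,D}$ using Algorithm~\ref{alg:extchecking}. Lemma~\ref{lem:reducts} guarantees completeness (every stable extension of~$\encoding{W,D}$ is captured by some $Y$), so the only question is how expensive the inner work is. The parameter~$|B|$ enters solely through the outer $3^{|B|}$-loop, so the parameterised class is determined by whether the reduct-side search for a stable extension and the implication queries inside Algorithm~\ref{alg:extchecking} can be carried out in $\P$, $\NP$, or $\DeltaP{2}$.

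For the $\krom$ case I would argue in complete parallel to Theorem~\ref{thm:eval-cnf-to-horn}: Lemma~\ref{lem:KromImpInP} gives $\IMP(\krom)\in\P$, so both $\checkExt$ applied to the reduct and the lifting queries $E\models y_i$ are polynomial. The only nondeterminism is the guess of the set of generating defaults for the reduct. Together with the $3^{|B|}$ loop this yields $\para\NP$.

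For the $\id$ case I would use that implication among conjunctions of positive unit clauses is polynomial (it is even in $\AC^0$, as noted by Beyersdorff et~al.), so Algorithm~\ref{alg:extchecking} runs in $\P$. An $\id$-default theory has, moreover, a deterministic canonical extension obtainable by a unit-propagation fixed point, which removes the need to guess generating defaults. The inner computation is therefore polynomial and the whole algorithm runs in $3^{|B|}\cdot n^{O(1)}$ time, placing the problem in $\FPT$.

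The main obstacle is the $\monotone$ case, because $\IMP(\monotone)$ is not obviously tractable. The plan is to exploit a structural collapse of default rules when justifications are monotone: for any monotone $\beta\notin\{\top,\bot\}$, the all-ones assignment satisfies every monotone formula but falsifies $\lnot\beta$, so $\lnot\beta$ can never be implied from a monotone knowledge base. Hence a rule $\frac{\alpha:\beta}{\gamma}$ with $\beta$ monotone and non-trivial behaves as if its justification were absent, while the two trivial cases $\beta\in\{\top,\bot\}$ are decided in polynomial time. Consequently the reduct has at most one candidate stable extension, and it can be constructed by a deterministic iteration that answers $O(|D|^{2})$ implication queries, each in $\co\NP$. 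This puts the inner work in $\P^{\NP}=\DeltaP{2}$, and combined with the outer $3^{|B|}$ enumeration gives $\para\DeltaP{2}$. The subtle point to double-check is that the lifted formula~$E'$ over the original variables can still be verified by Algorithm~\ref{alg:extchecking} within $\DeltaP{2}$: this holds because the implication queries inside the algorithm are standard $\CNF$ implications, and only polynomially many of them arise.
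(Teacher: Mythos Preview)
Your proposal is correct and follows essentially the same strategy as the paper: instantiate the generic evaluation scheme of Theorem~\ref{thm:eval-cnf-to-horn} (Algorithm~\ref{alg:generic-eval-algo}) for each target class, using $\IMP(\krom)\in\P$ for the $\krom$ case, the observation that negations of nontrivial monotone justifications are never entailed (hence a unique, deterministically computable candidate extension via $\co\NP$ implication queries) for the $\monotone$ case, and the tractability of $\id$ implication together with deterministic forward chaining for the $\id$ case. If anything, your write-up is slightly more explicit than the paper's---you spell out the all-ones witness for why $\lnot\beta$ cannot be entailed in the monotone case and why no nondeterministic guess of generating defaults is needed for $\id$, points the paper leaves implicit by just citing ``similar argumentation'' to the $\horn$ proof.
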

\fi

\section{Backdoor Detection}\label{sec:detection}
In this section, we study the problem of finding backdoors, formalised
in terms of the following parameterised problem:

\pproblem%
{\BDdetect{\CNF}{\form}}%
{A $\CNF$ default theory~$T$ and an integer~$k$.}%
{The integer~$k$.}%
{Does $T$ have a strong $\form$-backdoor of size at most~$k$?}

If the target class~$\form'$ is clause-induced, we can use a decision
algorithm for~\BDdetect{\form}{\form'} to find the backdoor using
self-reduction~\cite{Schnorr81,DowneyFellows13}. \iflong\else The proof can be found in an extended version~\cite{report}.\fi

\begin{lemma}
  Let $\form$ be a clause-induced class of \CNF formulas.
  If \BDdetect{\CNF}{\form} is fixed-parameter tractable, then also
  computing a \strongBd{\form} of size at most~$k$ of a given default
  theory~$T$ is fixed-parameter tractable (for parameter~$k$).
\end{lemma}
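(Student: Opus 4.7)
The plan is a Schnorr-style self-reduction using the hypothesised $\FPT$ decision algorithm $A$ for $\BDdetect{\CNF}{\form}$ as an oracle, say with runtime $f(k)\cdot n^{O(1)}$.

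The core reduction I would construct is, for each variable $x\in\Vars{W,D}$, an auxiliary $\CNF$ default theory $\widehat T_x$ (computable in polynomial time) such that
\[
T \text{ has a strong } \form\text{-backdoor of size } {\le}\,k \text{ containing } x \;\Longleftrightarrow\; \widehat T_x \text{ has a strong } \form\text{-backdoor of size } {\le}\,k{-}1.
\]
I would take $\widehat T_x$ to be the disjoint union, on the shared vocabulary $\Vars{W,D}\setminus\{x\}$, of the three reducts $\reduct{T}{x}$, $\reduct{T}{\lnot x}$, and $\reduct{T}{x_\varepsilon}$, after renaming the auxiliary $y_i$-variables introduced by each of the three reducts so that the three sets of auxiliaries are pairwise disjoint (and disjoint from $\Vars{W,D}\setminus\{x\}$). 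Because reducts act clause-wise and the three components share only the vocabulary $\Vars{W,D}\setminus\{x\}$, a set $B'\subseteq\Vars{W,D}\setminus\{x\}$ is a strong $\form$-backdoor of $\widehat T_x$ iff for every $Y\in\tA(B')$ each of the three reducts $\reduct{T}{Y\cup\{\ell\}}$ (with $\ell\in\{x,\lnot x,x_\varepsilon\}$) lies in $\form$, which is exactly the statement that $\{x\}\cup B'$ is a strong $\form$-backdoor of $T$.

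With this reduction in hand, the computation proceeds recursively: if $T\in\form$ return $\emptyset$; otherwise run $A(T,k)$ and reject if it rejects; then iterate over $x\in\Vars{W,D}$ calling $A(\widehat T_x,k-1)$ until a successful $x$ is found, and recurse on $(\widehat T_x,k-1)$ to obtain $B'$, returning $\{x\}\cup B'$. Correctness of the iteration follows from the equivalence above: if $T\notin\form$ has a backdoor of size $\le k$ then that backdoor is nonempty, so picking any $x$ in it yields an accepting instance at the next level. The recursion has depth at most $k$, and each level performs $O(n)$ calls to $A$, so the total runtime is bounded by $O(kn)\cdot f(k)\cdot n^{O(1)}$, which is fpt-time.

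The main obstacle is justifying the union construction: one must show that membership of $\reduct{\widehat T_x}{Y}$ in $\form$ is equivalent to simultaneous membership of the three component reducts in $\form$. This decomposition follows because reducts act clause-by-clause and the three components share no variables outside $\Vars{W,D}\setminus\{x\}$. The equivalence with per-component membership requires $\form$ to be closed under the disjoint union of clause sets, a property enjoyed by all target classes considered here ($\horn$, $\krom$, $\monotone$, $\id$, and $\CNF$ itself), each being specified by a clause-local condition. A secondary book-keeping point is to ensure that the renamed auxiliary $y_i$-variables are never chosen as backdoor candidates by the recursive calls, which is routine since they appear only in trivially-$\form$ unit clauses and can be excluded from consideration by the detection algorithm without loss.
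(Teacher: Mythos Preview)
Your self-reduction is the same idea as the paper's proof, which also proceeds by induction on $k$, iterates over variables $v$, tests the three reducts $\reduct{T}{v}$, $\reduct{T}{\lnot v}$, $\reduct{T}{v_\varepsilon}$ with the decision oracle at budget $k-1$, and recurses. The paper is terser and, read literally, leaves a gap you have noticed and fixed: it speaks of computing a single $B$ that is simultaneously a strong $\form$-backdoor of all three reducts, without explaining how three independent oracle calls guarantee a \emph{common} $B$. Your $\widehat T_x$ construction resolves this honestly by packaging the three reducts into one default theory, so that a single oracle call at budget $k-1$ already asks the right question.

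That said, your packaging buys the common-$B$ guarantee at the price of an extra hypothesis---closure of $\form$ under disjoint clause-union---which is not implied by ``clause-induced'' and which you have to invoke separately for each target class. There is a cleaner route, which the paper itself uses later in the proof of Theorem~\ref{thm:bd-detect}: for any clause-induced $\form$ and any $Z\in\tA(X)$ one has $\reduct{\phi}{Z}\subseteq\reduct{\phi}{\{x_\varepsilon:x\in X\}}$, so the all-$\varepsilon$ reduct dominates. Hence it suffices to test and recurse on the single theory $\reduct{T}{v_\varepsilon}$; the common-$B$ issue disappears, no union or extra closure assumption is needed, and the argument stays within the stated clause-induced hypothesis. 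Your treatment of the auxiliary $y_i$ variables is adequate for the concrete target classes (positive unit clauses lie in each of them, so including or excluding a $y_i$ in a backdoor is immaterial); the paper's proof simply does not mention this point.
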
\iflong
\begin{proof}
  Let $T = \encoding{W,D}$ be a default theory. We proceed by
  induction on $k$. If $k=0$ the statement is clearly true. Let
  $k>0$. Given $(T,k)$ we check for all~$v \in \Vars W \cup \Vars D$
  whether $\reduct{W,D}{Y}$, $\reduct{W,D}{Y'}$, and
  $\reduct{W,D}{Y''}$ have a \strongBd{\form} of size at most~$k-1$
  where $Y=\{v\}$, $Y'=\{\neg v\}$, and $Y''=\{v_\epsilon\}$.  If the
  answer is No for all $v$, then $T$ has no \strongBd{\form} of
  size~$k$. If the answer is Yes for~$v$, then by induction hypothesis
  we can compute a \strongBd{\form}~$B$ of size at most~$k-1$ of
  $\reduct{W,D}{Y}$, $\reduct{W,D}{Y'}$, and $\reduct{W,D}{Y''}$ and
  $B \cup \{v\}$ is a \strongBd{\form} of $T$.\shortversion{\qed}
\end{proof}\fi

The following theorem provides interesting target classes, where we
can determining backdoors in fpt-time.

\newcommand{\CCC}{\ensuremath{\mathcal{C}}\xspace}

\begin{theorem}\label{thm:bd-detect}
  Let $\CCC \in \{\horn,$ \id, \krom, \monotone $\}$, then \iflong\else the problem\fi
  $\BDdetect{\CNF}{\CCC}\in \FPT$.
\end{theorem}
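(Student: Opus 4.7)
The plan is to first collapse the universal quantifier over $Y \in \tA(B)$ to a single reduct check, and then apply a bounded search-tree algorithm per target class. For the collapse, note that each target class $\CCC \in \{\horn, \id, \krom, \monotone\}$ is clause-induced (as observed after Table~\ref{tbl:forms}). I will argue that $B$ is a strong $\CCC$\hy backdoor if and only if the single ``all-$\varepsilon$'' reduct $\reduct{W,D}{Y_\varepsilon}$ with $Y_\varepsilon := \{x_\varepsilon \mid x \in B\}$ lies in $\CCC$. For any single extended literal $\ell$ with underlying variable $x$ and any clause $C$, $\reduct{C}{\ell}$ either vanishes (when $\ell$ is a literal and $\ell \in C$) or coincides, as a set of literals, with $\reduct{C}{x_\varepsilon}$; indeed, $\ell \notin C$ forces $C \setminus \{\sim\ell\} = C \setminus \{\ell, \sim\ell\}$. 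Iterating over all elements of $Y$ shows that $\reduct{W,D}{Y}$ is obtained from $\reduct{W,D}{Y_\varepsilon}$ by deleting a set of clauses, i.e.\ it is a subformula in the clause-induced sense; the fresh propositions $y_i$ attached to rule conclusions form positive-unit clauses and thus lie in all four target classes, so they do not interfere.

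With this equivalence in place, the task reduces to finding a variable set $B$ of size at most $k$ such that deleting all $B$-literals from every clause in $W$ and in every prerequisite, justification, and conclusion of $D$ yields a theory whose clauses all satisfy the clause condition of $\CCC$. I plan to handle each class by a bounded search-tree algorithm performing polynomial work per node. For $\monotone$, every negative literal in any clause forces its variable into $B$, so a single deterministic sweep solves the problem. For $\horn$, I will locate a clause whose current projection still carries two positive literals $p_1, p_2$ and branch on which of their variables to add to $B$: at least one must be added, else both positives remain. For $\krom$, a clause whose projection still has three literals $\ell_1, \ell_2, \ell_3$ yields a ternary branch on which of their variables enters $B$. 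For $\id$, every clause must project to a positive unit: if some clause has no positive literal at all I answer \emph{no}; otherwise I force each negative-literal variable into $B$, and for each clause whose projection still carries two or more positive literals I branch on which positive literal to retain, putting the remaining positive-literal variables into $B$. Since any fixable clause has at most $k+1$ literals, the $\id$ branching factor is bounded by $k+1$.

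Each of these search trees has depth at most $k$ and branching factor bounded by a function of $k$, so the total running time is $f(k)\cdot |I|^{O(1)}$, yielding $\BDdetect{\CNF}{\CCC}\in\FPT$. I expect the main obstacle to be the careful verification in the first step that the vanishing of clauses under literal reducts, together with the auxiliary $y_i$ atoms introduced on conclusions, really preserves rather than hides clause-condition violations; once this is pinned down, the four branching analyses are routine and largely parallel the standard SAT-backdoor arguments.
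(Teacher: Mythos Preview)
Your proof is correct and follows essentially the same approach as the paper: both first collapse the check over all $Y \in \tA(B)$ to the single all-$\varepsilon$ reduct via clause-inducedness, and both then show that the resulting ``delete backdoor variables so every clause lands in $\CCC$'' problem is fixed-parameter tractable. The only cosmetic difference is that the paper packages the $\horn$, $\id$, and $\krom$ cases as reductions to vertex cover and $3$\hy hitting set and cites known FPT algorithms, whereas you spell out the equivalent bounded search-tree branchings directly; for $\monotone$ both proofs use the same deterministic sweep.
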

\begin{proof}
  Let $\encoding{W,D}$ be a \CNF default theory and
  $\mathsf{F}:=W \cup \SSB \prereq(\delta),\allowbreak
  \just(\delta),\allowbreak \concl(\delta) \SM \delta \in D \SSE$.
  Since each class~$\CCC \in \{\horn,$ \id, \krom, \monotone $\}$ is
  clause-induced and then obviously
  $\reduct{\phi}{Z} \subseteq \reduct{\phi}{Y}$ holds for any
  $Z \in \tA(X)$, we have to consider only the case
  $Y = \SSB x_\epsilon \SM x \in X \SSE$ to construct a
  $\strongBd{\CCC}$ of $\encoding{W,D}$.  Thus let
  $Y = \SSB x_\epsilon \SM x \in X \SSE$\iflong in the following.\else.\fi.

  $\CCC = \monotone$: %
  A \CNF formula~$\phi$ is monotone if every literal appears only
  positively in any clause~$C \in \phi$ where $\phi \in
  \mathsf{F}$. 
  We can trivially construct a smallest \strongBd{\monotone} by taking
  all negative literals of clauses in formulas of $\mathsf{F}$ in
  linear time. Hence, the claim holds.

  For $\CCC \in \{\horn, \id, \krom \}$ we follow known constructions
  from the propositional setting~\cite{SamerSzeider08c}. Therefore, we
  consider certain (hyper\hy)graph representations of the given theory
  and establish that a set~$B\subseteq \Vars{\mathsf{F}}$ is a \strongBd{\CCC}
  of $\encoding{W,D}$ if and only if $B$ is a $d$\hy hitting set of
  the respective (hyper\hy)graph representation of $\encoding{W,D}$
  where $d$ depends on the class of formulas,~i.e., $d=2$ for $\horn$
  and $\id$ and $d=3$ for $\krom$.
  %
  A 2\hy hitting set (\emph{vertex cover}) of a graph~$G=(V,E)$ is a
  set~$S \subseteq V$ such that for every edge~$uv\in E$ we have
  $\{u,v\} \cap S \neq \emptyset$.
  A \emph{3-hitting set} of a hypergraph~$H=(\mathsf{V},\mathsf{E})$,
  with $E \in \mathsf{E}$ and $\Card{E} \leq 3$, is a
  set~$S \subseteq \mathsf{V}$ such that for every
  hyperedge~$E \in \mathsf{E}$ we have $E \cap S \neq \emptyset$.
  Then, a vertex cover of size at most~$k$, if it exists, can be found
  in time~$O(1.2738^k + kn)$~\cite{ChenKanjXia10} and a 3-hitting set
  of size at most~$k$, if it exists, can be found in
  time~$O(2.179^k + n^3)$~\cite{Fernau10a}, which gives us then a
  \strongBd{\CCC} of~$\encoding{W,D}$. It remains to define the
  specific graph representations and to establish the connection to
  \strongBd{\CCC}s.

  Definition of the various (hyper\hy)graphs: For $\CCC = \horn$ we
  define a graph~$G^+_T$ on the set of variables of $\mathsf{F}$,
  where two distinct variables~$x$ and $y$ are joined by an edge if
  there is a formula~$\phi \in \mathsf{F}$ and some
  clause~$C \in \phi$ with $x,y \in C$.
  %
  %
  %
  For $\CCC = \id$ we define a graph~$G_T$ on the set of variables of
  $\mathsf{F}$, where two distinct variables~$x$ and $y$ are joined by
  an edge if there is a formula~$\phi \in \mathsf{F}$ and some
  clause~$C \in \phi$ with $l_x,l_y \in C$ where
  $l_x \in \{x, \neg x\}$ and $l_y \in \{y, \neg y\}$. 
  %
  For $\CCC = \krom$ we define a hypergraph~$H_T$ on the
  variables~$\Vars{\mathsf{F}}$ where distinct variables~$x$, $y$, $z$
  are joined by a hyperedge if there is a
  formula~$\phi \in \mathsf{F}$ and some clause $C \in \phi$ with
  $\{x,y,z\} \subseteq \Vars C$.
  %

\iflong
  Next, we establish the only-if direction of the claim:
  %
  Let $B\subseteq \Vars{\mathsf{F}}$ be a \strongBd{\CCC} of
  $\encoding{W,D}$.  Consider an edge~$uv$ of $G$. By construction of
  $G^+_T$, $G_T$, and $H_T$ there is a corresponding
  clause~$C \in \phi$ for some formula~$\phi \in \mathsf{F}$ with
  $u,v \in C$.  By assumption, we construct $\reduct{\phi}{Y}$ from
  $\phi$ by deleting all occurrences of literals~$\lnot x$ and $x$
  from clauses in~$\phi$.  Since each clause in $\reduct{\phi}{Y}$
  contains at most one positive literal (\horn), or only positive unit
  clauses (\id), or at most one variable (\krom), respectively, we
  have $\{u,v\}\cap X \neq \emptyset$.
  We conclude that $B$ is a vertex cover of $G^+_T$, vertex cover of
  $G_T$, or 3\hy hitting set of $H_T$, respectively, which establishes
  the only-if direction of the claim.

  Finally, we establish the if direction of the claim: Therefore,
  assume that $B$ is a $d$\hy hitting set of the graph respective
  (hyper\hy)graph representation ($d=2$ for $\horn$ and $\id$ and
  $d=3$ for $\krom$).
  %
  Consider a clause~$C \in \reduct{\phi}{Y}$
  for some~$\phi \in \mathsf{F}$. For proof by contradiction assume
  that $C$ is not Horn, or not positive unit, or not Krom,
  respectively.
  Then there is a set~$S\subseteq C$ ($\Card{V}=2$ for $\horn$ and
  $\id$ and $\Card{V}=3$ for $\krom$) and an edge~$S$ of $G$ such that
  $S \cap X =\emptyset$, contradicting the assumption that $B$ is a
  vertex cover or 3\hy hitting set, respectively. Hence the if
  direction of the claim holds, which establishes the theorem. \shortversion{\qed}
\else
Then it is straightforward to show the correctness of the reduction.\shortversion{\qed}
\fi
\end{proof}

Now, we can use Theorem~\ref{thm:bd-detect} to strengthen the results
of Theorem~\ref{thm:eval-cnf-to-horn} and
\iflong%
Corollaries~\ref{cor:eval-cnf-to-krom} and \ref{cor:eval-cnf-to-mon} %
\else
Corollary~\ref{cor:eval-rest}
\fi
by dropping the assumption that the backdoor is given.

\begin{corollary}
  Let $\CCC \in \{\horn,$ \krom, \monotone $\}$, then the problem
  $\SBDexpl(\CNF\to\CCC)$ is in $\paraNP$ when parameterised by the
  size of a smallest \strongBd{\CCC} of the given theory. Further, the
  problem $\SBDexpl(\CNF\to\id)$ is in $\FPT$ when parameterised by
  the size of a smallest \strongBd{\id} of the given theory.
\end{corollary}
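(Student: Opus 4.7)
The plan is to strip the assumption ``backdoor given'' from Theorem~\ref{thm:eval-cnf-to-horn} and Corollary~\ref{cor:eval-rest} by first computing a strong $\CCC$\hy backdoor in fpt-time and then invoking the existing evaluation algorithm on it. Concretely, I would chain a detection phase and an evaluation phase, and verify that the composition stays in the claimed parameterised class.

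First, I would invoke Theorem~\ref{thm:bd-detect}: for every $\CCC\in\{\horn,\krom,\monotone,\id\}$ the decision problem $\BDdetect{\CNF}{\CCC}$ is in $\FPT$. Since each such $\CCC$ is clause-induced (as already noted in the preliminaries), the self-reduction lemma at the beginning of Section~\ref{sec:detection} upgrades this decision procedure to an fpt-time \emph{construction} that, on input $(\encoding{W,D},k)$, actually outputs a strong $\CCC$\hy backdoor $B$ of size at most $k$ whenever one exists. Because the parameter of the present problem is the size of a smallest strong $\CCC$\hy backdoor, this $k$ is guaranteed to be a valid upper bound, so the construction succeeds.

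Second, I would feed the pair $(\encoding{W,D},B)$ into the evaluation procedure that assumes the backdoor is given: Theorem~\ref{thm:eval-cnf-to-horn} handles $\horn$, and Corollary~\ref{cor:eval-rest} handles $\krom$, $\monotone$, and $\id$. This yields running times in $\para\NP$ for $\horn$ and $\krom$, in $\para\DeltaP{2}$ for $\monotone$ (which is contained in $\para\NP$ modulo the stated bound, or simply used as the tighter bound if preferred), and in $\FPT$ for $\id$. Since $|B|\le k$, the parameter of the evaluation step is bounded by the parameter of the overall problem, so no parameter blow-up occurs.

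The only remaining observation is the routine fact that sequentially composing an fpt-time subroutine with a $\para\NP$ (respectively $\FPT$) subroutine stays inside the same parameterised class: the fpt preprocessing is absorbed into the $f(k)\cdot|x|^{O(1)}$ factor, and the nondeterministic/oracle steps of the second phase are unaffected. I do not expect any genuine obstacle here; the main subtlety is merely to remember that we need the clause-induced self-reduction, not just the decision version of Theorem~\ref{thm:bd-detect}, in order to hand a concrete backdoor $B$ to the evaluation algorithm.
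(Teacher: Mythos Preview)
Your plan is exactly the paper's: it states the corollary immediately after Theorem~\ref{thm:bd-detect} with the one-line justification that detection lets one ``drop the assumption that the backdoor is given'' and then invoke Theorem~\ref{thm:eval-cnf-to-horn} and the evaluation corollaries. Your use of the self-reduction lemma to turn the decision procedure into a construction is the right extra detail, and the composition argument is routine and correct.

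There is, however, one concrete error. You write that the $\para\DeltaP{2}$ bound for $\monotone$ ``is contained in $\para\NP$''. The containment goes the other way: $\NP\subseteq\DeltaP{2}=\P^{\NP}$, hence $\para\NP\subseteq\para\DeltaP{2}$, and $\para\DeltaP{2}\subseteq\para\NP$ would imply $\P^{\NP}\subseteq\NP$. So you cannot conclude $\para\NP$ for $\monotone$ from Corollary~\ref{cor:eval-cnf-to-mon} by containment. In fact this tension is already present in the paper: the earlier corollary and the conclusion section both give only $\para\DeltaP{2}$ for $\monotone$, while the statement of the present corollary lists $\monotone$ under $\para\NP$. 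Your argument correctly yields $\para\NP$ for $\horn$ and $\krom$ and $\FPT$ for $\id$; for $\monotone$ what your chain actually proves is membership in $\para\DeltaP{2}$, matching the rest of the paper rather than the (apparently overclaimed) wording of this corollary.
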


\section{Conclusion}
We have introduced a notion of strong backdoors for propositional
default logic. In particular, we investigated on the parameterised
decision problems backdoor detection and backdoor evaluation. We have
established that backdoor detection for the classes~$\CNF$, $\horn$,
$\krom$, $\monotone$, and $\id$ are fixed-parameter tractable whereas
for evaluation the classification is more complex. If $\CNF$ is the
starting class and $\horn$ or $\krom$ is the target class, then
backdoor evaluation is in $\para\NP$. If $\monotone$ is the target
class, then backdoor evaluation is in $\para\DeltaP2$, which is can be
solved by an fpt-algorithm that can query a SAT solver multiple
times~\cite{DeHaanSzeider14a}. For $\id$ as target class backdoor
evaluation is fixed-parameter tractable. 

An interesting task for future research is to consider the remaining
Schaefer classes~\cite{schaefer78},~e.g., dual-Horn, 1- and 0-valid,
as well as the classes renamable-Horn and
QHorn~\cite{bocrha90,Boros19941}, and investigate whether we can
generalise 
\iflong%
Algorithm~\ref{alg:generic-eval-algo}.%
\else%
the concept of Theorem~\ref{thm:eval-cnf-to-horn}.
\fi 
We have established for backdoor evaluation the upper bounds $\para\NP$ and
$\para\DeltaP2$, respectively. We think that it would also be
interesting to establish corresponding lower bounds. 
Finally, a direct application of quantified Boolean formulas in the
context of propositional default logic, for instance, via the work of
Egly et~al.~\cite{ettw00} or exploiting backdoors similar to results
by Fichte and Szeider~\cite{fs15b}, might yield new insights.

\subsection*{Acknowledgements}
The first author gratefully acknowledges support by the Austrian
Science Fund (FWF), Grant Y698. He is also affiliated with the
Institute of Computer Science and Computational Science at University
of Potsdam, Germany. The second and third author gratefully
acknowledge support by the German Research Foundation (DFG), Grant
ME 4279/1-1. The authors thank Jonni Virtema for pointing out
Lemma~\ref{lem:KromImpInP} and Sebastian Ordyniak for discussions on
Lemma~\ref{lem:imp_invariant_deletion}.

\bibliographystyle{plain}
\bibliography{dlback}

\addcontentsline{toc}{chapter}{Bibliography}

\end{document}